\newtheorem{definition}{Definition}
\definecolor{Gray}{gray}{0.85}
\newcolumntype{a}{>{\columncolor{Gray}}c}
\newcommand{\sz}{\scriptstyle }
\newcommand{\tg}{\mathcal{G}\xspace}
\newcommand{\tge}{\mathcal{E}\xspace}
\newcommand{\edgestream}{\xi\xspace}
\newtheorem{observation}{Observation}
\begin{document}

\title{\Large An Index for Single Source All Destinations Distance Queries in Temporal Graphs}

\author{Lutz Oettershagen\thanks{{Institute of Computer Science},
        {University of Bonn},
        \texttt{\{lutz.oettershagen,petra.mutzel\}@cs.uni-bonn.de}}
\and Petra Mutzel\textsuperscript{*}
}

\date{}

\maketitle

\begin{abstract} \small\baselineskip=9pt 
Temporal closeness is a generalization of the classical closeness centrality measure for analyzing evolving networks. 
The temporal closeness of a vertex $v$ is defined as the sum of the reciprocals of the temporal distances to the other vertices. 
Ranking all vertices of a network according to the temporal closeness is computationally expensive as it leads to
a single-source-all-destination (SSAD) temporal distance query starting from each vertex of the graph.
To reduce the running time of temporal closeness computations, we introduce an index to speed up SSAD temporal distance queries called \emph{Substream} index. We show that deciding if a \emph{Substream} index of a given size exists is NP-complete and provide an efficient greedy approximation. Moreover, we improve the running time of the approximation using min-hashing and parallelization. Our evaluation with real-world temporal networks shows a running time improvement of up to one order of magnitude compared to the state-of-the-art temporal closeness ranking algorithms.
\end{abstract}

\section{Introduction}

Computing closeness centrality is an essential task in network analysis and data mining~\cite{braha2009time,oettershagen2020efficient,santoro2011time,tang2010analysing}.
In this work, we focus on improving the efficiency of computing temporal closeness in evolving networks.
We represent evolving networks using \emph{temporal graphs}, which consists of a finite set of vertices and a finite set of temporal edges. Each temporal edge is only available at a specific discrete point in time, and edge transition takes a strictly positive amount of time.
Temporal graphs are often good models for real-life scenarios due to the inherently dynamic nature of most real-world activities and processes.
For example, temporal graphs are used to model and analyze bioinformatics networks~\cite{lebre2010statistical,przytycka2010toward}, communication networks~\cite{candia2008uncovering,eckmann2004entropy}, contact networks~\cite{ciaperoni2020relevance,oettershagen2020classifying}, social networks~\cite{holme2004structure,moinet2015burstiness}, and transportation networks~\cite{pyrga2008efficient}. 

Temporal closeness is one of the popular and essential centrality measures for temporal networks, and various variants of temporal closeness have been discussed~\cite{magnien2015time,pan2011path,tang2010analysing,santoro2011time}. Here, we consider one of the standard variants, the \emph{harmonic temporal closeness} of a vertex, which is defined as the sum of the reciprocals of the \emph{durations} of the fastest paths to all other vertices~\cite{oettershagen2020efficient}. 
Unfortunately, the computation with respect to the minimum duration distance is expensive and can be prohibitive for large temporal networks~\cite{oettershagen2020efficient,wu2014path}.
To overcome this obstacle, we introduce an index for efficiently answering single-source-all-destination (SSAD) temporal distance queries in temporal graphs.

\noindent\textbf{Our idea:}
We exploit the often very limited reachability in temporal graphs.
\begin{figure}
    \centering
    \begin{subfigure}{0.3\linewidth}
        \centering
        \begin{tikzpicture}[scale=0.8]
            \node[style={}] (0,-1.5){};
            \begin{scope}[every node/.style={circle,thick,draw,minimum size=4mm,inner sep=0pt}]
                \node (A) at (0,3) {$a$};
                \node (B) at (0,1.5) {$c$};
                \node (C) at (1.5,3) {$b$};
                \node (D) at (1.5,1.5) {$d$};
                \node (E) at (0,0) {$e$};
                \node (F) at (1.5,0) {$f$} ;
            \end{scope}
            
            \begin{scope}
                \path [->] (A) edge[color=red,thick,bend left=15] node[right] {$3$} (B);
                \path [->] (A) edge[color=red,thick,bend left=15] node[above] {$1$} (C);
                \path [->] (A) edge[color=red,thick,bend right=15] node[below] {$2$} (C);
                \path [->] (B) edge[thick,bend left=15] node[left] {$2$} (A);
                \path [->] (B) edge[color=red,thick] node[left] {$9$} (E);
                \path [->] (C) edge[color=red,thick] node[right] {$3$} (D);
                \path [->] (D) edge[thick] node[right] {$1$} (F);
                \path [->] (E) edge[thick] node[above] {$6$} (F);
                \path [->] (F) edge[thick] node[above] {$7$} (B);
            \end{scope}
        \end{tikzpicture}
        \label{fig:examplea}
        \caption{$\tg$}
    \end{subfigure}\hfill
    \begin{subfigure}{0.3\linewidth}
        \centering
        \begin{tikzpicture}[scale=0.8]
            \begin{scope}[every node/.style={circle,thick,draw,minimum size=4mm,inner sep=0pt}]
                \node[color=blue] (A) at (0,3) {$a$};
                \node[color=blue] (B) at (0,1.5) {$c$};
                \node[color=blue] (C) at (1.5,3) {$b$};
                \node (D) at (1.5,1.5) {$d$};
                \node (E) at (0,0) {$e$};
                \node[gray] (F) at (1.5,0) {$f$} ;
            \end{scope}
            
            \begin{scope}
                \path [->] (A) edge[thick,bend left=15] node[right] {$3$} (B);
                \path [->] (A) edge[thick,bend left=15] node[above] {$1$} (C);
                \path [->] (A) edge[thick,bend right=15] node[below] {$2$} (C);
                \path [->] (B) edge[thick,bend left=15] node[left] {$2$} (A);
                \path [->] (B) edge[thick] node[left] {$9$} (E);
                \path [->] (C) edge[thick] node[right] {$3$} (D);
            \end{scope}
        \end{tikzpicture}
        \caption{$\tg_1$}
        \label{fig:exampleb}
    \end{subfigure}\hfill
    \begin{subfigure}{0.3\linewidth}
        \centering
        \begin{tikzpicture}[scale=0.8]
            \node[style={}] (x) at (0,3.4){ };
            \begin{scope}[every node/.style={circle,thick,draw,minimum size=4mm,inner sep=0pt}]
                \node[gray] (A) at (0,3) {$a$};
                \node (B) at (0,1.5) {$c$};
                \node[gray] (C) at (1.5,3) {$b$};
                \node[color=blue] (D) at (1.5,1.5) {$d$};
                \node[color=blue] (E) at (0,0) {$e$};
                \node[color=blue] (F) at (1.5,0) {$f$} ;
            \end{scope}
            
            \begin{scope}
                \path [->] (B) edge[thick] node[left] {$9$} (E);
                \path [->] (D) edge[thick] node[right] {$1$} (F);
                \path [->] (E) edge[thick] node[above] {$6$} (F);
                \path [->] (F) edge[thick] node[above] {$7$} (B);
            \end{scope}
        \end{tikzpicture}
        \caption{$\tg_2$}
    \end{subfigure}
    \caption{(a) Temporal graph $\tg$ with availability times shown at the edges. The transition times are one for all edges. 
        The edge stream $\edgestream(a)$ is highlighted red; it contains all edges that can be part of a temporal walk starting at $a$.
        (b) Temporal subgraph $\tg_1$ of $\tg$ contains all edges, such that the temporal closeness of the vertices $a$, $b$, and $c$ can be computed. (c) Temporal subgraph $\tg_2$ contains all edges, such that the temporal closeness for $d$, $e$, or $f$ can be determined.
        Vertices that can not be reached are colored gray.\vspace{-2mm}
    }
    \label{fig:example}
\end{figure}
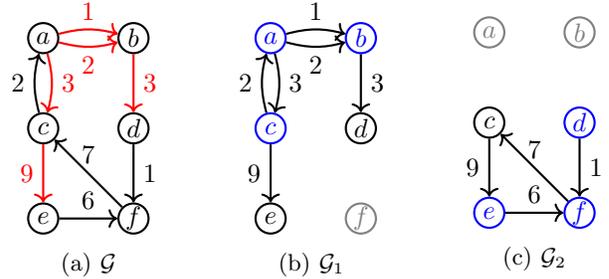
Given a temporal graph, we construct $k$ smaller (possibly non-disjoint) subgraphs. 
We guarantee that for each vertex of the input graph, there exists one of the $k$ subgraphs, such that temporal distance queries, and hence its temporal closeness, can be answered with a single pass over the chronologically ordered edges of the subgraph using a state-of-the-art streaming algorithm introduced in~\cite{wu2014path}. 
For example, \Cref{fig:example} shows a temporal graph (a), for which the temporal distance queries starting from vertices $a$, $b$, or $c$ can be answered using only the temporal subgraph shown in (b) and starting from any of the remaining vertices using only the temporal subgraph shown in (c).  

It is insufficient to store the temporal graph in an adjacency list representation and compute the minimum duration distance using
label setting algorithms. The streaming approach for computing the temporal distances is often already significantly faster~\cite{oettershagen2020efficient,wu2014path}.  

\smallskip\noindent\textbf{Contributions:} 
\begin{compactenum}
    \item We propose the \emph{Substream} index for temporal closeness computation in temporal graphs. 
    We show that deciding if a \emph{Substream} index of a given size can be constructed is an NP-complete problem. 
    \item We introduce an efficient approximation for constructing a \emph{Substream} with guarantees on the resulting index size.
    Next, we improve our approximation with min-hashing and shared-memory parallelization to speed up the index construction.
    \item In our evaluation on real-world temporal graphs, we show that our approach achieves up to an order of magnitude faster temporal closeness computation times (indexing + querying) compared to the state-of-the-art algorithms.
\end{compactenum}

\section{Preliminaries}\label{sec:prelim}
We use $\mathbb{N}$ to denote the strictly positive integers.
For $\ell\in\mathbb{N}$, we denote with $[\ell]$ the set $\{1,\ldots,\ell\}$.
A \emph{directed temporal graph} $\tg=(V, \tge)$ consists of a finite set of vertices $V$ and a finite set $\tge$ of directed \emph{temporal edges} $e=(u,v,t,\lambda)$ with \emph{tail} $u$ and \emph{head} $v$ in $V$, \emph{availability time} (or \emph{timestamp}) $t \in \mathbb{N}$ and \emph{transition time} $\lambda \in \mathbb{N}$. The transition time of an edge denotes the time required to traverse the edge.
We only consider directed temporal graphs---it is possible to model undirectedness by using a forward- and a backward-directed edge with equal timestamps and transition times for each undirected edge.
We call a sequence $S=(e_1,\ldots,e_m)$ of $m$ temporal edges with non-decreasing availability times (ties are broken arbitrarily) a \emph{temporal edge stream}.
A temporal edge stream $S$ induces a temporal graph $\tg=(V(S),S)$ with $V(S)=\{u,v\mid (u,v,t,\lambda)\in S \}$, where we, for notational convenience, interpret the sequence $S$ as a set of edges. Given a temporal edge stream $S$, we denote with $n$ the number of vertices of the induced temporal graph,
and with $n^+$ the number of vertices with at least one outgoing edge, i.e., non-sink vertices. 
Let $S$ and $S'$ be temporal edge streams and $S'\subseteq S$, i.e., $S'$ contains only edges from $S$. We call $S'$ a \emph{substream} of $S$.
If it is clear from the context, we use the view of a temporal graph $\tg$ and the corresponding edge stream interchangeably.
The size of an edge stream $S$ consisting of $m$ edges is $|S|=m$. 
We assume that $m\geq \frac{n}{2}$, which holds unless isolated vertices exist, to simplify the discussion of running time complexities. Note that edge streams do not have isolated vertices.
Let $S_1$ and $S_2$ be two temporal edge streams, we denote by $S_3=S_1\cup S_2$ the union of the two temporal edges streams and $S_3$ is a temporal edge stream, i.e., the edges of $S_3$ are ordered in non-decreasing order of their availability time. 
$S_3$ can be computed in $\mathcal{O}(|S_1|+|S_2|)$ time due to the ordering of the edges in non-decreasing availability times.
We denote the lifetime of a temporal graph $\tg=(V,\tge)$ (or edge stream $S$) with $T(\tg)=[t_{min},t_{max}]$ with $t_{min}=\min\{t\mid e=(u,v,t,\lambda)\in \tge\}$ and $t_{max}=\max\{t+\lambda\mid e=(u,v,t,\lambda)\in \tge\}$.

\smallskip
\noindent
\textbf{Temporal Distance and Closeness}\label{sec:temporalpath}
A \emph{temporal walk} between vertices $v_1,v_{\ell+1}\in V$ of length $\ell$ is a sequence of $\ell$ temporal edges 
$(e_1=(v_1,v_2,t_1,\lambda_1),e_2=(v_2,v_3,t_2,\lambda_2),$ $\dots,e_\ell=(v_\ell,v_{\ell+1},t_\ell,\lambda_\ell))$
such the head of $e_i$ equals the tail of $e_{i+1}$, and $t_{i}+\lambda_i \leq t_{i+1}$ for $1\leq i <\ell$. 
A \emph{temporal path} $P$ is a temporal walk in which each vertex is visited at most once.
The \emph{starting time} of $P$ is $s(P)=t_1$, the \emph{arrival time} is $a(P)=t_\ell+\lambda_\ell$, and
the \emph{duration} is $d(P)=a(P)-s(P)$. 
A \emph{minimum duration} or \emph{fastest} $(u,v)$-path is a path from $u$ to $v$ with shortest duration among all paths from $u$ to $v$.
The harmonic temporal closeness is defined in terms of minimum duration.
\begin{definition}[Harmonic Temporal Closeness]\label{tcc:def:closeness}
    Let $\tg=(V,\tge)$ be a temporal graph.
    We define the harmonic temporal closeness for $u\in V$ as
    $ c(u) = \sum_{v\in V\setminus \{u\}}\frac{1}{d(u,v)}$. 
\end{definition}
If $v$ is not reachable from vertex $u$, we set $d(u,v)=\infty$, and we define $\frac{1}{\infty}=0$.

\smallskip
\noindent
\textbf{Temporal Reachability}
We say vertex $v$ is \emph{reachable} from vertex $u$ if there exists a temporal $(u,v)$-path. 
We denote with $\edgestream(v)$ the subset of edges that can be used by any temporal walk starting at $v$ ordered in non-decreasing availability times (ties are broken arbitrarily), i.e., $\edgestream(v)$ is a temporal edge stream.
For example, in \Cref{fig:example}a the edges of $\edgestream(a)$ are highlighted in red.
Temporal graphs are, in general, not strongly connected and have limited reachability with respect to temporal paths due to the missing symmetry and transitivity.

\smallskip
\noindent
\textbf{Restrictive Interval}
Temporal distance, reachability, and closeness computations can be additionally restricted to a time interval $\tau=[a,b]$ such that only edges $e=(u,v,t,\lambda)\in\tge$ are considered that start and arrive in the interval $\tau$, i.e., $a \leq t < b$ and $a < t+\lambda \leq b$.

\section{Substream Index}\label{sec:substream}
The substream index constructs $k$ temporal subgraphs from a given temporal graph, leveraging the following simple observation.
\begin{observation}\label{obs:reach}
The temporal durations between vertex $v\in V$ and all other vertices, can be determined solely with edges in $\edgestream(v)$.
\end{observation}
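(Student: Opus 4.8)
The plan is to show both inclusions of relevant edges: every edge that a fastest $(v,w)$-path could use already lies in $\edgestream(v)$, and no edge outside $\edgestream(v)$ can affect any temporal duration originating at $v$. Since the harmonic closeness $c(v)$ is defined purely through the durations $d(v,w)$ for $w\in V\setminus\{v\}$, and each such duration is realized by a fastest temporal $(v,w)$-path, it suffices to argue that the set of temporal paths starting at $v$ is identical whether we consider $\tg$ or the substream $\edgestream(v)$.

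First I would recall the definition: $\edgestream(v)$ is exactly the set of edges usable by \emph{some} temporal walk starting at $v$. The key step is then to fix an arbitrary temporal $(v,w)$-path $P=(e_1,\ldots,e_\ell)$ in $\tg$ and show every edge $e_i$ of $P$ belongs to $\edgestream(v)$. This is immediate from the definition, because the prefix $(e_1,\ldots,e_i)$ is itself a temporal walk starting at $v$ (the head/tail matching and the timing constraint $t_j+\lambda_j\le t_{j+1}$ are inherited from $P$), so $e_i$ is an edge usable by a temporal walk from $v$, hence $e_i\in\edgestream(v)$. Therefore $P$ survives entirely inside $\edgestream(v)$, meaning every temporal $(v,w)$-path of $\tg$ is also a temporal $(v,w)$-path in the subgraph induced by $\edgestream(v)$.

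Conversely, since $\edgestream(v)\subseteq \tge$ and the timing and incidence constraints defining a temporal walk are preserved under taking subsets of edges, any temporal $(v,w)$-path in $\edgestream(v)$ is trivially a temporal $(v,w)$-path in $\tg$. Combining the two directions, the families of temporal $(v,w)$-paths coincide, so the minimum over their durations is the same; that is, $d(v,w)$ computed in $\tg$ equals $d(v,w)$ computed in $\edgestream(v)$, for every $w$ (including the convention $d(v,w)=\infty$ when no such path exists in either). Summing reciprocals then gives equality of the closeness values, which is the claimed statement.

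The main obstacle is essentially bookkeeping rather than a deep difficulty: one must be careful that the prefix of a temporal path is still a legal temporal \emph{walk} (not necessarily a path, but that is all $\edgestream(v)$ requires), and that the arbitrary tie-breaking in the non-decreasing ordering of $\edgestream(v)$ does not alter which paths are realizable—it does not, since reachability and durations depend only on the timestamps $t_i$ and transition times $\lambda_i$, not on the order in which edges with equal timestamps are listed. I would state this as a one-line remark to forestall any worry that the streaming order matters.
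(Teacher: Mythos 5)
Your proof is correct and takes essentially the same route the paper intends: the paper states this as a self-evident observation following directly from the definition of $\edgestream(v)$, and your prefix-walk argument (every edge $e_i$ of a temporal $(v,w)$-path lies on the temporal walk $(e_1,\ldots,e_i)$ starting at $v$, hence $e_i\in\edgestream(v)$, while the converse containment is trivial because $\edgestream(v)\subseteq\tge$) is precisely the formalization that definition implies. Nothing is missing; the closing remark on tie-breaking is fine but inessential, since durations depend only on timestamps and transition times, not on the listed order of simultaneous edges.
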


Given a temporal graph $\tg=(V,\tge)$ in its edge stream representation $S$ and $2\leq k < n$, we first construct $k$ substreams $S_1,\ldots,S_k$ of $S$. 
Each vertex $v\in V$ is assigned to exactly one of the new substreams, such that the substream contains all edges that can be used by any temporal walk leaving $v$.
We use an additional empty stream $S_0=\emptyset$ to which we assign all sink-vertices, i.e., vertices with no outgoing edges.
The substreams and the vertex assignment together form the substream index, which we define as follows.
\begin{definition}
Let $S$ be a temporal edge stream and $k\geq 2\in\mathbb{N}$. 
We define the pair $\mathcal{I}=(\mathcal{S}, f)$ with $\mathcal{S}=\{S_0,S_1,\ldots,S_k\}$, $S_i\subseteq S$ for $1\leq i\leq k$, $S_0=\emptyset$, and $f:V\rightarrow [k]\cup \{0\}$ as \emph{substream index}, 
with $f$ maps $v\in V(S)$ to an index $i$ of subset $S_i\in \mathcal{S}$, such that $\edgestream(v)\subseteq S_i$.
\end{definition}
A pair $(v,\tau)$, with $v\in V$ and $\tau$ is a restrictive time interval is a query to the substream index.
We answer it by running the fastest paths streaming algorithm from~\cite{wu2014path}, on the substream $S_{f(v)}$ for vertex $v$ and restricting time interval $\tau$. 
The streaming algorithm uses a single pass over the edges in the substream $S_{f(v)}$.
Next, we define the size of a substream index as the maximum substream size.
\begin{definition}
Let $\mathcal{I}=(\mathcal{S}, f)$ be a substream index.
The size of $\mathcal{I}$ is $size(\mathcal{I})=\max_{S\in \mathcal{S}}\{|S|\}$.
\end{definition}

Before discussing the query times, we bound the number of vertices that can be assigned to a substream.
\begin{lemma}\label{lemma:num_assigned_vertices}
The maximal number of vertices assigned to any substream $S\in\mathcal{S}$ is at most $2\cdot size(\mathcal{I})$
\end{lemma}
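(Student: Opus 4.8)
The plan is to fix one substream and bound the number of vertices that $f$ maps to it by the number of edges it contains, then invoke the standing assumption that a temporal edge stream on $n$ vertices has $m\geq n/2$ edges. I would concentrate on the edge-bearing substreams $S_1,\ldots,S_k$; the empty stream $S_0$ receives only sink vertices, whose distance queries are degenerate and never invoke the streaming algorithm, so it lies outside the query-time bound this lemma is meant to support.

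First I would fix $i\in[k]$ and let $V_i=\{v\in V : f(v)=i\}$ be the set of vertices assigned to $S_i$. The crucial step is to show that every $v\in V_i$ actually occurs as a vertex of the induced graph of $S_i$. Since all sinks are assigned to $S_0$, each $v\in V_i$ has at least one outgoing edge $e=(v,w,t,\lambda)$. This single edge is by itself a temporal walk starting at $v$, so $e\in\edgestream(v)$ by the definition of $\edgestream(v)$, and the defining property of $f$ gives $\edgestream(v)\subseteq S_i$; hence $e\in S_i$ and therefore $v\in V(S_i)$. Consequently $V_i\subseteq V(S_i)$, so $|V_i|\leq|V(S_i)|$.

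Second I would bound $|V(S_i)|$ by the size of $S_i$. Because $S_i$ is itself a temporal edge stream it has no isolated vertices, so the standing assumption $m\geq n/2$ applies to $S_i$ and yields $|V(S_i)|\leq 2|S_i|$. Chaining the inequalities and using $|S_i|\leq size(\mathcal{I})$ gives $|V_i|\leq|V(S_i)|\leq 2|S_i|\leq 2\cdot size(\mathcal{I})$, as claimed. I do not expect a serious obstacle here: the only points needing care are justifying that an assigned vertex genuinely appears in its substream (which rests on the non-sink property together with the definition of $\edgestream(v)$) and applying the vertex--edge inequality to the substream rather than to the whole input. In fact an even tighter bound $|V_i|\leq|S_i|$ follows by counting the distinct tails among the edges of $S_i$, but the factor-two estimate suffices and is consistent with the vertex--edge relationship used throughout.
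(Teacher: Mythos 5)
Your proof is correct and takes essentially the same route as the paper's, whose entire argument is the counting bound $|\{u,v\mid (u,v,t,\lambda)\in S_i\}|\leq 2|S_i|\leq 2\cdot size(\mathcal{I})$ for $i\in[k]$. You merely make explicit two points the paper leaves implicit---that each assigned vertex genuinely occurs in its substream, since its outgoing edge is a length-one walk in $\edgestream(v)\subseteq S_i$, and that $S_0$ is excluded just as in the paper's ``for $i\in[k]$''---so this is a (welcome) sharpening of the same argument, not a different one.
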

\begin{proof}
For $i\in [k]$, the number of vertices occurring in $S_i$ is $|\{u,v\mid (u,v,t,\lambda)\in S_i\}|\leq 2|S_i|$. Hence, the maximal number of vertices assigned to any substream $S_i$ is at most $2\cdot size(\mathcal{I})$.
\end{proof}

We now discuss the query time of the substream index.
\begin{theorem}\label[theorem]{theorem:runningtimes}
Let $\mathcal{I}$ be a substream index, $\Delta=size(\mathcal{I})$, and let $\delta$ be the maximal in-degree of a vertex in any of the substreams.
Given a query $(u,\tau)$, let $\sigma_+(u)$ the set of availability times of edges leaving the query vertex $u$, and $\rho=\min\{\delta, |\sigma_+(u)|\}$. 
Answering a fastest path query is possible in $\mathcal{O}(\Delta\log\rho)$, and if the transition times are equal for all edges, in $\mathcal{O}(\Delta)$. 
\end{theorem}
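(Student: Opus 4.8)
The plan is to reduce the claim to a per-edge running-time analysis of the streaming fastest-path algorithm of~\cite{wu2014path} executed on the single substream $S_{f(u)}$. By the definition of the index, a query $(u,\tau)$ is answered by one chronological pass over $S_{f(u)}$, and since $size(\mathcal{I})=\Delta$ is the maximum substream size, this pass reads at most $\Delta$ edges. It therefore suffices to bound the cost of processing a single edge. Recall that the algorithm maintains, for every vertex $w$, a Pareto frontier $L_w$ of pairs (start time at the source, earliest arrival time at $w$), kept sorted so that both coordinates increase along the frontier; when an edge $e=(x,y,t,\lambda)\in S_{f(u)}$ is read, it performs one lookup in $L_x$ (the latest-starting entry whose arrival time is at most $t$) and one insertion of the resulting candidate pair $(s,t+\lambda)$ into $L_y$, discarding newly dominated entries. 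Thus the per-edge cost is governed by the frontier sizes and the cost of a lookup and an insertion.

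First I would bound $|L_w|\le\rho$ for every vertex $w$ by two independent arguments. On the one hand, every temporal path leaving $u$ starts with an edge leaving $u$, so its start time lies in $\sigma_+(u)$; since a frontier stores at most one non-dominated entry per distinct start time, $|L_w|\le|\sigma_+(u)|$. On the other hand, an entry is appended to $L_w$ only while processing an edge whose head is $w$, so $|L_w|$ is at most the in-degree of $w$ in the substream, which is at most $\delta$. Combining the two gives $|L_w|\le\min\{\delta,|\sigma_+(u)|\}=\rho$.

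For the general case, I would realize each frontier as a balanced search tree keyed on the arrival-time coordinate. The lookup in $L_x$ is a predecessor search, and the update of $L_y$ is an insertion together with the amortized removal of dominated neighbors, each costing $\mathcal{O}(\log|L_{\cdot}|)=\mathcal{O}(\log\rho)$ by the size bound. Multiplying by the at most $\Delta$ processed edges yields the $\mathcal{O}(\Delta\log\rho)$ bound.

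For the uniform-transition-time case I would exploit monotonicity to remove the logarithmic factor. Because edges are read in non-decreasing availability time and $\lambda$ is constant, every inserted pair $(s,t+\lambda)$ has an arrival coordinate at least as large as those inserted before, so each insertion is an append at the high-arrival end of $L_y$ and, together with popping dominated entries from that same end, costs $\mathcal{O}(1)$ amortized. Symmetrically, the lookups on a fixed frontier $L_x$ stem from the edges leaving $x$, which are read with non-decreasing thresholds $t$; maintaining a per-vertex pointer that only advances turns all these lookups into a single monotone scan, again $\mathcal{O}(1)$ amortized. Summing over the at most $\Delta$ edges gives $\mathcal{O}(\Delta)$. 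The \textbf{main obstacle} is making this last amortized argument rigorous: one must verify that the advancing pointer interacts correctly with the dominated-entry deletions, so that a deleted entry is never revisited and each monotone lookup still returns the correct latest-starting predecessor, which is exactly where the monotone insertion order into the frontiers is needed.
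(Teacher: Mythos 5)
Your proposal is correct, but it takes a genuinely different route from the paper. The paper's own proof is a one-liner: it invokes the published running-time bounds of the streaming algorithm of~\cite{wu2014path} as a black box, together with \Cref{lemma:num_assigned_vertices}, which bounds the number of vertices occurring in any substream by $2\cdot size(\mathcal{I})$. You instead re-derive that analysis from first principles on $S_{f(u)}$: the Pareto-frontier size bound $|L_w|\le\min\{\delta,|\sigma_+(u)|\}=\rho$ via two independent counting arguments (one non-dominated entry per distinct start time in $\sigma_+(u)$, and one insertion per in-edge of $w$), the $\mathcal{O}(\log\rho)$ balanced-tree cost per edge in the general case, and the amortized $\mathcal{O}(1)$ monotone-pointer argument for uniform transition times. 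Both counting arguments are sound, and the ``main obstacle'' you flag does resolve the way you hope: with uniform $\lambda$, an entry deleted from $L_x$ during the insertion triggered by an edge with timestamp $t'$ has arrival exactly $t'+\lambda>t'$, whereas any pointer into $L_x$ rests at an entry with arrival at most the timestamp of an already-processed edge, hence at most $t'$; so deletions occur strictly ahead of every pointer and no deleted entry is ever revisited. What your route buys is transparency---it makes explicit why the parameters $\delta$, $\sigma_+(u)$, $\rho$ of the theorem are exactly the quantities governing the per-edge cost, which the paper leaves hidden inside the citation. What the paper's route buys, besides brevity, is one term your accounting silently drops: the streaming algorithm also pays an additive cost proportional to the number of vertices of the substream (initializing the per-vertex lists and duration values), and this is precisely why the paper needs \Cref{lemma:num_assigned_vertices} to conclude that this term is $\mathcal{O}(\Delta)$. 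To make your per-edge argument airtight you should either add the same appeal to \Cref{lemma:num_assigned_vertices}, or stipulate that the frontiers $L_w$ are created lazily upon first access, so that no initialization cost beyond $\mathcal{O}(\Delta)$ is incurred.
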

\begin{proof}
The result follows from the running times of the streaming algorithms~\cite{wu2014path} and \Cref{lemma:num_assigned_vertices}.
\end{proof}

\Cref{theorem:runningtimes} is the basis for the running time improvement of the temporal closeness computation.
Using the substream index, ranking all vertices according to their temporal closeness is possible with $n$ fastest path queries with a total running time in $\mathcal{O}(n \Delta)$.

As a trade-off, we need additional space for storing the substreams---for a temporal graph with $m$ edges and $n$ vertices, the space complexity of the substream index is in $\mathcal{O}(k\cdot m+n)$. 
Finally, it is noteworthy that a substream index can be used to output the distances as well as the corresponding paths,
and can be used to speed up other temporal distance queries, e.g., earliest arrival or latest departure time queries.

\subsection{Hardness of Finding a Minimal Index}
Unfortunately, deciding if there exists a substream index with a given size is NP-complete. %
\begin{theorem}\label{theorem:hardness2}
Given a temporal graph $\tg=(V,\tge)$, and $k,B\in \mathbb{N}$. Deciding if there exists a substream index $\mathcal{I}$ with $k+1$ substreams and $size(\mathcal{I})\leq B$ is NP-complete.
\end{theorem}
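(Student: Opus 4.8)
The plan is to prove membership in NP and then NP-hardness by reduction from the \emph{strongly} NP-complete 3-PARTITION problem. For membership, I would take the assignment $f$ as the certificate. Given $f$, the minimal valid substreams are $S_i=\bigcup_{v:\,f(v)=i}\edgestream(v)$, and since enlarging any $S_i$ can only increase the size, it suffices to check these minimal streams. Each $\edgestream(v)$ is a forward temporal-reachability set computable in polynomial time, so the unions $S_i$ and their sizes are computable in polynomial time; we then verify that there are $k+1$ streams and that $|S_i|\le B$ for all $i$. Hence the problem lies in NP.

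For hardness, recall 3-PARTITION: given $3m$ positive integers $a_1,\dots,a_{3m}$ with $\sum_i a_i = mB$ and $B/4 < a_i < B/2$, decide whether they can be split into $m$ triples each summing to $B$; this is NP-complete even under unary encoding. The idea is to build a temporal graph whose non-sink reachability sets are pairwise disjoint streams of prescribed sizes. Concretely, for each $a_i$ I would introduce one vertex $v_i$ together with $a_i$ fresh sink vertices, and add an edge from $v_i$ to each such sink, all with timestamp $1$ and transition time $1$. Since no vertex has an incoming edge except the private sinks, no $v_j$ is reachable from any $v_i$, so every temporal walk from $v_i$ consists of a single edge to one of its sinks; thus $\edgestream(v_i)$ is exactly its set of $a_i$ outgoing edges, and $\edgestream(v_1),\dots,\edgestream(v_{3m})$ are pairwise disjoint. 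I set $k=m$ (we may assume $m\ge 2$, as smaller cases are solvable in polynomial time, so that $k\ge 2$) and reuse the same bound $B$.

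The equivalence then follows from a short calculation. The only non-sink vertices are $v_1,\dots,v_{3m}$, so a substream index with $k+1=m+1$ streams is exactly an assignment of these vertices to $m$ groups, with the sinks going to $S_0$. Because the sets $\edgestream(v_i)$ are disjoint, the minimal stream for group $j$ has size $\sum_{i:\,f(v_i)=j} a_i$, i.e.\ the group's integer sum. Requiring $size(\mathcal{I})\le B$ forces every group sum to be at most $B$; since the $m$ group sums total $mB$, each must equal exactly $B$, and the constraint $B/4<a_i<B/2$ forces each group to contain exactly three integers. Therefore a substream index of size at most $B$ with $m+1$ streams exists if and only if the 3-PARTITION instance is a yes-instance.

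The construction is polynomial because the graph has $3m+mB$ vertices and $mB$ edges, which is polynomial in the unary-encoded size of the instance---this is precisely why I reduce from a strongly NP-complete problem rather than from PARTITION or BIN PACKING with binary weights. The main obstacle, and the crux of the argument, is engineering the temporal reachability so that the streams $\edgestream(v_i)$ are simultaneously pairwise disjoint and of exactly the prescribed sizes while keeping the graph polynomially small; routing each $v_i$ into its own batch of private sinks is the gadget that achieves this, after which the problem collapses cleanly onto 3-PARTITION.
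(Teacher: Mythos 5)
Your proof is correct and takes essentially the same approach as the paper's: both establish hardness by reducing from a strongly NP-complete packing problem (the paper uses \textsc{UnaryBinPacking}, you use 3-PARTITION), building a gadget for each item so that the reachability sets $\edgestream(v_i)$ are pairwise disjoint and have exactly the prescribed sizes, whereupon the min-max substream size collapses onto the packing instance. The differences are cosmetic---the paper encodes item size $w_i$ as $w_i$ self-loops at a single vertex rather than edges to private sinks (making its graph smaller), while your version adds the standard extra step of arguing that group sums of at most $B$ must all equal $B$ and contain exactly three items, and your NP-membership argument (certificate $=$ assignment $f$, check the minimal unions) is spelled out in more detail than the paper's one-line verification.
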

\begin{proof}
    We use a polynomial-time reduction from \textsc{UnaryBinPacking}, which is the following NP-complete problem \cite{garey1979computers}:\\
    \noindent\textbf{Given:} A set of $m$ items, each with positive size $w_i\in\mathbb{N}$ encoded in unary for $i\in [m]$, $k\in \mathbb{N}$,
    and $B\in\mathbb{N}$.
    \\\textbf{Question:} Is there a partition $I_1,\ldots, I_k$ of $\{1,\ldots,m\}$ such that $\max_{1\leq i\leq k}\{W_i\}\leq B$ where $W_i=\sum_{j\in I_i}w_j$?
    
    Given a substream index $\mathcal{I}$ for a temporal graph and $B\in\mathbb{N}$, we can verify in polynomial time if $size(\mathcal{I})\leq B$. Hence, the problem is in NP.
    We reduce \textsc{UnaryBinPacking} to the problem of deciding if there exists a substream index of size less or equal to $B$.
    Given an instance of \textsc{UnaryBinPacking} with $m$ items of sizes $w_i$ for $i\in[m]$, we construct in polynomial time a temporal graph $\tg=(V,\tge)$ that consists of $m$ vertices.
    More specifically, for each $i\in[m]$, we construct a vertex $v_i$ that has $w_i$ self-loops, such that
    each edge $e\in\tge$ has a unique availability time. 
    We show that if \textsc{UnaryBinPacking} has a \emph{yes} answer, then $size(\mathcal{I})\leq B$ and vice versa.
    \\$\Rightarrow:$ Let $I_1,\ldots, I_k$ be the partition of $[m]$ such that $\max_{1\leq i\leq k}\{W_i\}\leq B$. 
    For our \emph{Substream} index, we use $k$ substreams $S_i=\bigcup_{j \in I_i}\edgestream(v_j)$ for $i\in [k]$.
    Because $|\edgestream(v_j)|=w_j$ for $j\in [m]$ it follows that the size of the substream index 
    $size(\mathcal{I})=\max_{1\leq i\leq k}\{|S_i|\}=\max_{1\leq i\leq k}\left\{\textstyle\sum_{j\in I_i}w_j\right\}\leq B.$
    \\$\Leftarrow:$ Let $\mathcal{I}=(\mathcal{S}, f)$ with $\mathcal{S}=\{S_0,\ldots,S_k\}$ be the \emph{Substream} index with $size(\mathcal{I})=\max_{1\leq i\leq k}\{|S_i|\}\leq B$. 
    From the vertex mapping $f$, we construct the partition $I_1,\ldots, I_k$ of $[m]$ such that $\max_{1\leq i\leq k}\{W_i\}\leq B$ holds for the \textsc{UnaryBinPacking} instance.
    Let $I_i=\{j\in[m]\mid f(v_j)=i\}$ for $i\in[k]$.
    Finally, with $|\edgestream(v_j)|=w_j$, it follows that $\max_{1\leq i\leq k}\{\textstyle\sum_{j\in I_i}w_j\}\leq B$.
\end{proof}
\subsection{A Greedy Approximation}\label{sec:greedy}
We now introduce an efficient algorithm for computing a substream index with a bounded ratio between the size of the computed index and the optimal size.
\Cref{alg:greedy} shows the simple greedy algorithm for computing the substream index for a temporal graph in edge stream representation $S$. 
After initialization, \Cref{alg:greedy} runs $n$ iterations of the for loop in line \ref{alg:greedy:for}.
The vertices are processed in an arbitrary order $v_1,\ldots,v_n$.
In iteration $i$, the algorithm first computes the edge stream $\edgestream(v_i)$ using a single pass over the edge stream. 
After computing $\edgestream(v_i)$, it is added to one of the substreams, and the mapping $f(v_i)=j$ is updated.
In each round, $\edgestream(v_i)$ is added to one of the substreams $S_j\in\{S_1,\ldots,S_k\}$, such that the increase of the size of $S_j$ is minimal (line \ref{alg:greedy:find}.).
All vertices $v\in V$ for which $\edgestream(v)$ is empty are assigned to the empty substream $S_0$ (line \ref{alg:greedy:zero}). 
\begin{algorithm2e}[tb]\label[algorithm]{alg:greedy}
    \small
    \caption{Greedy index computation.}

    \Input{Temporal edge stream $S$, $k\in\mathbb{N}$.}
    \Output{Substream index $(\mathcal{S}, f)$}
    \BlankLine
    Initialize %
    $S_j=\emptyset$ for $1\leq j \leq k$, and $f(v)=0$ for $v\in V$ \\
    \For{$i=1$ to $n$}{\label{alg:greedy:for}
        Compute $\edgestream(v_i)$\\
        \lIf{$\edgestream(v_i)=\emptyset$}{$f(v_i)\gets 0$}\label{alg:greedy:zero}
        \Else{
            {Let $S_j$ such that $|S_j\cup \edgestream(v_i)|$ is minimal}\\\label{alg:greedy:find}
            $S_j \gets S_j \cup \edgestream(v_i)$\\
            $f(v_i)\gets j$
        }
    }
    \Return{$(\{S_1,\ldots,S_k\}, f)$}
\end{algorithm2e}

We use the following bounds to show the approximation ratio of the sizes of an optimal index and one constructed by \Cref{alg:greedy}. 

\begin{lemma}\label{lemma:bounds}
    Let $\tg$ be a temporal graph with $n^+$ non-sink vertices, $k\in \mathbb{N}$, and
    let $B\subseteq \{\edgestream(v)\mid v\in V\}$ such that $|B|=\left\lceil \frac{n^+}{k}\right\rceil$ and the union $\left|\bigcup_{\edgestream\in B}\edgestream\right|$ is maximal.
    \begin{compactenum}
        \item The size of an optimal index is $size(\textsc{Opt})\geq \frac{m}{k}$,
        \item and for greedy, $size(\textsc{Greedy})\leq \left|\bigcup_{\edgestream\in B}\edgestream\right|.$
    \end{compactenum}
\end{lemma}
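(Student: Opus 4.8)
The plan is to prove the two bounds independently, since they concern different objects. For the lower bound on $size(\textsc{Opt})$ (part~1), I would first observe that every temporal edge $e=(u,v,t,\lambda)$ is by itself a temporal walk starting at $u$, hence $e\in\edgestream(u)$; moreover $u$ is a non-sink, so $f(u)\in[k]$. In any valid substream index the mapping forces $\edgestream(u)\subseteq S_{f(u)}$, so $e$ lies in at least one of $S_1,\dots,S_k$ (recall $S_0=\emptyset$). Consequently $\bigcup_{i=1}^{k}S_i = S$, and therefore $\sum_{i=1}^{k}|S_i|\ge \bigl|\bigcup_{i=1}^{k}S_i\bigr| = m$, where the inequality holds regardless of overlaps between the substreams. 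An averaging/pigeonhole argument then gives $\max_{i\in[k]}|S_i|\ge m/k$, and since $size(\mathcal{I})$ equals this maximum, $size(\textsc{Opt})\ge m/k$.

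For part~2 I would argue about a largest substream produced by \Cref{alg:greedy}. Let $S_{max}$ attain $\Gamma := size(\textsc{Greedy})$, write $S_l^{(p-1)}$ for the state of substream $l$ at the start of iteration $p$, and let $p$ be the iteration in which the last edge stream $\edgestream(v_p)$ is added to $S_{max}$; after that iteration $S_{max}$ no longer changes, so $\Gamma = |S_{max}^{(p-1)}\cup\edgestream(v_p)|$. The greedy selection rule in line~\ref{alg:greedy:find} guarantees that $S_{max}$ was the cheapest target, i.e.\ $\Gamma \le |S_l^{(p-1)}\cup\edgestream(v_p)|$ for every $l\in[k]$. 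The idea is then to pick among all substreams one that holds few original streams at time $p-1$, so that this right-hand side is a union of few edge streams of $\tg$ and can be compared with the maximal union defining $B$.

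The main obstacle is the exact counting that keeps that union small enough. Let $Q$ be the number of non-sink vertices among $v_1,\dots,v_{p-1}$; their streams are distributed over the $k$ substreams, so by averaging some $S_l$ contains at most $\lfloor Q/k\rfloor$ of them before iteration $p$. For this $l$, the set $S_l^{(p-1)}\cup\edgestream(v_p)$ is a union of at most $\lfloor Q/k\rfloor + 1$ edge streams. Since $v_p$ is itself a non-sink not counted in $Q$, we have $Q\le n^+-1$, and the integer identity $\lfloor (n^+-1)/k\rfloor + 1 = \lceil n^+/k\rceil$ yields $\lfloor Q/k\rfloor + 1 \le \lceil n^+/k\rceil$. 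Hence $\Gamma$ is bounded by the size of a union of at most $\lceil n^+/k\rceil$ edge streams; by monotonicity of the union (extend the collection to exactly $\lceil n^+/k\rceil$ streams if necessary) and the maximality of $B$, this is at most $\bigl|\bigcup_{\edgestream\in B}\edgestream\bigr|$, giving $size(\textsc{Greedy})\le \bigl|\bigcup_{\edgestream\in B}\edgestream\bigr|$. I expect the off-by-one in this ceiling/floor bookkeeping to be the only delicate point; the remainder is the familiar load-balancing flavour of greedy analysis, adapted from additive loads to monotone set unions.
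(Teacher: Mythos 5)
Your proof is correct, and at a high level it follows the same strategy as the paper's: part~1 via the fact that the substreams must cover all of $S$, and part~2 by playing greedy's minimality at a decisive iteration against a substream containing few of the original edge streams, then invoking the maximality of $B$. For part~1, your covering-plus-averaging argument (each edge $e=(u,v,t,\lambda)$ lies in $\edgestream(u)$, $u$ is a non-sink, hence $e\in S_{f(u)}$ with $f(u)\in[k]$, so $\sum_{i=1}^{k}|S_i|\geq m$ and the maximum is at least $m/k$) is a more careful rendering of the paper's informal ``edges distributed equally'' reasoning. The genuine difference is in part~2, and it works in your favor. The paper argues by contradiction: if some insertion pushed a substream above $\left|\bigcup_{\edgestream\in B}\edgestream\right|$, then by minimality $|S_\ell\cup\edgestream(v_i)|>\left|\bigcup_{\edgestream\in B}\edgestream\right|$ for all $\ell$, and this is said to contradict the inequality $\sum_{\ell=1}^{k}|S_\ell\cup\edgestream(v_i)|\leq k\left|\bigcup_{\edgestream\in B}\edgestream\right|$. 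That sum inequality is asserted without justification, and it is false in general: for $k=2$ and four pairwise disjoint singleton streams $\{a\},\{b\},\{c\},\{d\}$ (so $\lceil n^+/k\rceil=2$ and $\left|\bigcup_{\edgestream\in B}\edgestream\right|=2$), \Cref{alg:greedy} can legitimately reach $S_1=\{a,c\}$, $S_2=\{b\}$ before the last insertion, where the sum is $|\{a,c,d\}|+|\{b,d\}|=5>4$; union sizes do not average like additive loads. What is true --- and all the argument needs --- is that \emph{at least one} $\ell$ satisfies $|S_\ell\cup\edgestream(v_i)|\leq\left|\bigcup_{\edgestream\in B}\edgestream\right|$, and this is exactly what your pigeonhole on stream counts establishes: some $S_\ell$ is a union of at most $\lfloor Q/k\rfloor\leq\lfloor (n^+-1)/k\rfloor$ edge streams, so $S_\ell\cup\edgestream(v_p)$ is a union of at most $\lfloor (n^+-1)/k\rfloor+1=\lceil n^+/k\rceil$ of them, hence bounded by the maximal union defining $B$. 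Your ceiling/floor bookkeeping is right (including the observation that $v_p$ is a non-sink not counted in $Q$), and your direct formulation at the last iteration touching the largest substream avoids the contradiction framing altogether. So rather than a different route, your proposal is the rigorous repair of precisely the step the paper glosses over.
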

\begin{proof}
    (1) Each edge $e\in \tge$ has to be in at least one substream. 
    The size of the substream index is minimized if the edges are distributed equally to all substreams---in this case, reassigning any edge from one of the $k$ substreams to another would lead to an increase of the size of the index. Hence, $size(\textsc{Opt})\geq\frac{m}{k}$.
    (2) 
    Assume in some iteration $i$ the edge stream $\edgestream(v_i)$ is added to a substream $S_j$ such that after the addition
    $ |S_j|>|\cup_{\edgestream\in B}\edgestream|$.
    \Cref{alg:greedy} chooses $S_j$ such that adding $\edgestream(v_i)$ to any other substream does not lead to a smaller value.
    Hence, for all $S_\ell$ with $1\leq \ell \leq k$ it is 
    $|S_\ell\cup \edgestream(v_i)|>|\cup_{\edgestream\in B}\edgestream|$.
    However, this leads to a contradiction to the maximal size of the sum over all substreams $\sum_{1\leq \ell\leq k}|S_\ell\cup \edgestream(v_i)|\leq k \cdot |\cup_{\edgestream\in B}\edgestream|$.
\end{proof}

\begin{theorem}\label[theorem]{theorem:approx}
    The ratio between the size of the greedy solution and the optimal solution for any valid input $S$ and $k$ is bounded by
        $\frac{size(\textsc{Greedy})}{size(\textsc{Opt})}\leq \frac{k}{\delta},\label[equation]{eq:approxratio}$
    with $\delta=\frac{m}{|\cup_{\edgestream\in B}\edgestream|}$ and $1\leq \delta$. %
\end{theorem}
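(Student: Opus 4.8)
The plan is to obtain the ratio directly by chaining the two inequalities already established in \Cref{lemma:bounds}, since both numerator and denominator have been bounded there. The lower bound on the optimum, $size(\textsc{Opt})\geq \frac{m}{k}$, controls the denominator, while the upper bound $size(\textsc{Greedy})\leq \bigl|\bigcup_{\edgestream\in B}\edgestream\bigr|$ controls the numerator. No further structural argument about the greedy assignment is needed at this point; everything reduces to algebra plus substituting the definition of $\delta$.

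First I would bound the quotient by replacing the numerator with its greedy upper bound and the denominator with the optimum's lower bound simultaneously. This is legitimate because $size(\textsc{Opt})\geq \frac{m}{k}>0$ and $size(\textsc{Greedy})\geq 0$, so shrinking a positive denominator and enlarging a nonnegative numerator can only increase the ratio. This yields
$$\frac{size(\textsc{Greedy})}{size(\textsc{Opt})}\leq \frac{\bigl|\bigcup_{\edgestream\in B}\edgestream\bigr|}{m/k}=\frac{k\,\bigl|\bigcup_{\edgestream\in B}\edgestream\bigr|}{m}.$$
Next I would substitute $\delta=\frac{m}{|\bigcup_{\edgestream\in B}\edgestream|}$, so that $\frac{1}{\delta}=\frac{|\bigcup_{\edgestream\in B}\edgestream|}{m}$; this rewrites the right-hand side as exactly $\frac{k}{\delta}$ and delivers the claimed bound.

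Finally I would justify the side condition $1\leq\delta$: every $\edgestream(v)$ is a substream of $S$, so $\bigcup_{\edgestream\in B}\edgestream\subseteq S$ and hence $\bigl|\bigcup_{\edgestream\in B}\edgestream\bigr|\leq m$, giving $\delta=m/\bigl|\bigcup_{\edgestream\in B}\edgestream\bigr|\geq 1$. I do not anticipate any real obstacle here: the genuine content of the argument lives entirely in \Cref{lemma:bounds}, and the theorem itself is a one-step combination, so the only thing requiring care is matching the definition of $\delta$ to the ratio correctly. The interpretive value of the statement is that the approximation factor degrades from the worst case $k$ only by the ``overlap'' factor $\delta$, which measures how far short the maximal union $\bigcup_{\edgestream\in B}\edgestream$ falls of covering all $m$ edges; when the reachable edge streams overlap heavily, $\delta$ is close to $1$ and the guarantee is weakest, whereas strong overlap among the $\edgestream(v)$ drives $\delta$ toward $k$ and the ratio toward $1$.
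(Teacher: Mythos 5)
Your proof is correct and takes essentially the same route as the paper: both arguments simply chain the two bounds of \Cref{lemma:bounds}, substitute the definition of $\delta$, and obtain $1\leq\delta$ from $\bigcup_{\edgestream\in B}\edgestream\subseteq\tge$. One small caveat outside the proof proper: your closing interpretation contradicts itself---heavy overlap among the $\edgestream(v)$ indeed forces $\delta$ close to $1$ (weakest guarantee), but it is near-disjointness of the streams, not ``strong overlap,'' that drives $\delta$ toward $k$ and the ratio toward $1$.
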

\begin{proof}
    With the two bounds from \Cref{lemma:bounds}, follows 
    \begin{align*}
        \frac{size(\textsc{Greedy})}{size(\textsc{Opt})}&\leq \frac{|\bigcup_{\edgestream\in B}\edgestream|}{\frac{m}{k}} = \frac{k\cdot|\bigcup_{\edgestream\in B}\edgestream|}{m}=\frac{k}{\delta}
    \end{align*}
    And, because $\cup_{\edgestream\in B}\edgestream\subseteq\tge$ it follows $1\leq \delta$.
\end{proof}

We now discuss the running time of \Cref{alg:greedy}.
\begin{theorem}\label{theorem:runningtime_greedy}
    Given a temporal graph $S$ in edge stream representation, and  $k\in\mathbb{N}$, the running time of \Cref{alg:greedy} is in $\mathcal{O}(mnk)$.
\end{theorem}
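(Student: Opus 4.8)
The plan is to bound the cost of each line of \Cref{alg:greedy} and then sum over the $n$ iterations of the for loop. First I would dispose of the initialization: setting $S_j=\emptyset$ for $1\le j\le k$ and $f(v)=0$ for all $v\in V$ costs $\mathcal{O}(k+n)$, which (using $k<n$ and the standing assumption $m\ge \frac{n}{2}$) is dominated by the loop cost and can be absorbed. The substance of the argument is therefore the per-iteration analysis.

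Within iteration $i$, I would account for three things. Computing $\edgestream(v_i)$ is done with a single pass over the input stream $S$, hence $\mathcal{O}(m)$. For line~\ref{alg:greedy:find}, the algorithm must evaluate $|S_j\cup\edgestream(v_i)|$ for each of the $k$ candidate substreams and take the minimizer. Here I would invoke two facts established earlier: every $S_j$ is a substream of $S$, so $|S_j|\le m$, and likewise $|\edgestream(v_i)|\le m$; and the union of two chronologically ordered edge streams can be formed (and its size read off) in time linear in the combined length, i.e. $\mathcal{O}(|S_j|+|\edgestream(v_i)|)=\mathcal{O}(m)$. Summing over the $k$ candidates gives $\mathcal{O}(km)$ for the selection step. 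Finally, performing the chosen update $S_j\gets S_j\cup\edgestream(v_i)$ is another $\mathcal{O}(m)$ union, and writing $f(v_i)\gets j$ is $\mathcal{O}(1)$. Thus a single iteration runs in $\mathcal{O}(m+km+m)=\mathcal{O}(km)$ time, the selection step being the bottleneck.

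Multiplying the per-iteration bound by the $n$ iterations yields a total of $\mathcal{O}(nkm)=\mathcal{O}(mnk)$, and adding the dominated initialization cost does not change this, completing the proof.

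The main point requiring care is the cost of the minimization in line~\ref{alg:greedy:find}: one must justify that each union size can be computed in $\mathcal{O}(m)$ rather than, say, by a quadratic set comparison. This rests precisely on the fact that substreams are kept in non-decreasing availability-time order, so a merge suffices, together with the uniform size bound $|S_j|,|\edgestream(v_i)|\le m$ coming from $S_j,\edgestream(v_i)\subseteq S$. Everything else is a routine summation, so I do not anticipate a genuine obstacle beyond stating these bookkeeping bounds cleanly.
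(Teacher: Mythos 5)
Your proposal is correct and follows essentially the same route as the paper's proof: both bound the initialization, charge $\mathcal{O}(m)$ per single-pass computation of $\edgestream(v_i)$, and observe that each of the $n\cdot(k+1)$ union operations costs $\mathcal{O}(m)$ because $|S_j|,|\edgestream(v_i)|\leq m$, yielding $\mathcal{O}(nmk)$ overall. Your explicit justification that each union is a linear-time merge of chronologically ordered streams is exactly the fact the paper establishes in its preliminaries and implicitly invokes.
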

\begin{proof}
    Initialization is done in $\mathcal{O}(n)$.
    The running time for computing all $\edgestream(v)$ of \Cref{alg:greedy} is in $\mathcal{O}(nm)$. For the assignment of the edge streams $\edgestream(v)$ to the substreams $S_1,\ldots, S_k$, the algorithm needs to compute the union $\edgestream(v_i)\cup S_j$ for each $j\in\{1,\ldots, k\}$. The sizes of $S_j$ and $\edgestream(v_i)$ are bounded by the number of edges $m$, therefore, union is possible in $\mathcal{O}(m)$ time.
    The algorithm needs $n\cdot(k+1)$ union operations, leading to a total running time of $\mathcal{O}(nmk)$.
\end{proof}

\subsection{Improving the Greedy Algorithm}\label{sec:substreamhash}
We improve the greedy algorithm presented in \Cref{sec:greedy} in three ways:
1) During the construction and queries, we skip edges that are too early in the temporal edge stream and do not need to be considered for answering a given query. 
2)~We use bottom-$h$\footnote{Usually called \emph{bottom-k} sketch. We use $h$ instead of $k$ because $k$ denotes the number of substreams.} sketches to avoid the costly union operations that we need to find the right substream to which we assign an edge stream.
3) We use parallelization and a batch-wise computation scheme to benefit from modern parallel processing capabilities.

Note that using only improvements 1) and 3), we would obtain a parallel greedy algorithm with the same approximation ratio as \Cref{alg:greedy}.  
However, improvement 2) leads to the loss of the approximation guarantee. In our experimental evaluation in \Cref{sec:experiments}, we will see that (i) the improved algorithm usually leads to indices that are not larger than the ones computed with \Cref{alg:greedy}, and (ii) the query times of the indices constructed with the improved algorithm are also faster for all data sets. 
In the following, we describe the improvements in detail.

\subsubsection{Edge Skipping}
The idea of edge skipping is to ignore all edges that have timestamps earlier than the availability time of the first edge leaving the query vertex $v$.
Let $S$ be a temporal edge stream with edges $e_1,\ldots, e_m$.
By definition, the edges are sorted in non-decreasing order of their availability times. 
The position of the first outgoing edge from vertex $v$ might be at a late position in the edge stream $S$.
For example, the first outgoing edge $e_p$ at $v$ could be at a position $p>m/2$. 
Therefore, if we know the position $p$ of the first edge, we can start the streaming algorithm at position $p$ and skip more than half of the edges in the run of the streaming algorithm.  
To exploit this idea, we store for each vertex $v\in V$ the first position $p$ in the edge stream $S$ of the first edge $e_p=(v,w,t,\lambda)$ that starts at vertex $v$. 
To compute the first positions, we first initialize an array of length $n$ in which we store the first positions of the earliest outgoing edges for each $v\in V$. We use a single pass over the edge stream to find these positions.
Hence, the array can be computed in $\mathcal{O}(n+m)$ running time, and it has a space complexity in $\mathcal{O}(n)$.
We use the edge skipping in two ways. First, it is used to speed up the computation of the edges streams $\edgestream(v)$ during the index construction. 
Secondly, we compute an array of starting positions for edge skipping for each of the final substreams in $\mathcal{S}$ to speed up the query times.

\subsubsection{Bottom-h Sketches}
The main drawback of \Cref{alg:greedy} is that it has to compute the union of $S_j\cup \edgestream(v_i)$ for all substreams $S_j$ for $1\leq j \leq k$ in order to determine the substream to which the edge stream $\edgestream(v_i)$ should be added. 
To avoid these expensive union computations, we reduce the sizes of $\edgestream(v)$ for $v\in V$ by using sketches of the edge streams, and estimate the \emph{Jaccard distance} between the sketches of the edge streams and substreams. 
For two sets $A$ and $B$, the Jaccard distance is defined as 
$J(A,B)=1-\frac{|A\cap B|}{|A\cup B|}$.
The Jaccard distance between two sets can be estimated using \emph{min-wise} hashing~\cite{broder1997resemblance}. 
The idea is to generate randomized sketches of sets that are too large to handle directly. After computing the sketches, further operations are done in the \emph{sketch space}. This way, it is possible to construct unions of sketches and estimate the Jaccard similarity between pairs of the original sets efficiently.

More specifically, let $A$ be a set of integers. A bottom-$h$ sketch $s(A)$ is generated by applying a permutation $\pi$ to the set $A$ and choosing $h$ smallest elements of the set $\{\pi(a)\mid a\in A\}$ ordered in non-decreasing value\footnote{We assume that $|A|\geq h$, otherwise we choose only $|A|$ elements.}. 
For two sets $A$ and $B$, we can obtain $s(A\cup B)$ by choosing $h$ smallest elements from $s(A)$ and $s(B)$.
This way, we obtain a sample of the union $A\cup B$ of size $h$.
Now, the subset $s(A\cup B) \cap s(A) \cap s(B)$ contains only the elements that are in the intersection of $A$, $B$, and the union sketch $s(A\cup B)$. We use the following result.
\begin{lemma}[\cite{broder1997resemblance}]
    The value
    \begin{center}
        $\hat{J}(A,B)=1-\frac{|s(A\cup B)\,\cap\,s(a)\,\cap\,s(b)|}{|s(A\cup B)|}$
    \end{center}
    is an unbiased estimator for the Jaccard distance. 
\end{lemma}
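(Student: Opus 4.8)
The plan is to establish that $\hat{J}(A,B)$ is an unbiased estimator of the Jaccard distance $J(A,B)$ by reducing the claim to the classical min-wise hashing guarantee of Broder. The central object is the union sketch $s(A\cup B)$, which I would first argue is a uniform random sample of size $h$ (assuming $|A\cup B|\geq h$) drawn from $A\cup B$ under the random permutation $\pi$. This holds because taking the $h$ smallest values of $\{\pi(x)\mid x\in A\cup B\}$ is equivalent to selecting the $h$ elements of $A\cup B$ whose images under $\pi$ are smallest, and since $\pi$ is a uniformly random permutation, every $h$-subset of $A\cup B$ is equally likely to be chosen. The key reconstruction identity to verify is that $s(A\cup B)$ can indeed be obtained by merging $s(A)$ and $s(B)$ and keeping the $h$ smallest elements; this follows because the $h$ globally smallest images in $A\cup B$ must each already appear among the $h$ smallest images of whichever of $A$ or $B$ contains them.

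Next I would analyze the numerator. Writing $R=s(A\cup B)$ for the sampled set of size $h$, I claim that $s(A\cup B)\cap s(A)\cap s(B)$ is exactly the set of elements of $R$ that lie in $A\cap B$. For an element $x\in R$, its image $\pi(x)$ is among the $h$ smallest over all of $A\cup B$, hence a fortiori among the $h$ smallest over $A$ if $x\in A$ and among the $h$ smallest over $B$ if $x\in B$; therefore $x\in s(A)\cap s(B)$ if and only if $x\in A\cap B$. Consequently
\[
\frac{|s(A\cup B)\cap s(A)\cap s(B)|}{|s(A\cup B)|}=\frac{|R\cap(A\cap B)|}{h},
\]
which is the fraction of the uniform sample $R$ falling in $A\cap B$.

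Taking expectations, since $R$ is a uniform random $h$-subset of $A\cup B$, each sampled element lies in $A\cap B$ with probability $\frac{|A\cap B|}{|A\cup B|}$, so by linearity of expectation the expected value of the fraction above equals the Jaccard similarity $\frac{|A\cap B|}{|A\cup B|}$. Subtracting from $1$ then gives $\mathbb{E}[\hat{J}(A,B)]=1-\frac{|A\cap B|}{|A\cup B|}=J(A,B)$, establishing unbiasedness. I would remark that the estimator's ratio form is what makes this work out exactly: although each indicator $[x\in A\cap B]$ is individually a Bernoulli with the right mean, the sampling without replacement introduces dependence, but linearity of expectation is insensitive to that dependence and yields the clean result.

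The main obstacle I anticipate is the uniformity-of-the-union-sketch claim, specifically justifying that bottom-$h$ sampling under a random permutation yields a genuinely uniform $h$-subset and that the merge-and-truncate reconstruction of $s(A\cup B)$ from $s(A)$ and $s(B)$ is correct. The subtle point is the boundary behavior when $|A|<h$ or $|B|<h$, where the footnote convention (keeping only $|A|$ elements) must be reconciled so that no element of the true global bottom-$h$ is lost; I would handle this by noting that if a set has fewer than $h$ elements its sketch contains all of them, so every candidate for the global minimum images is still present after merging. Since this lemma is cited directly from Broder, I would lean on that reference for the formal guarantee and present the reduction above as the justification that the specific construction described in the paper instantiates Broder's hypotheses.
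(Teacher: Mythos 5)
Your proposal is correct, but it does more than the paper itself: the paper gives no proof of this lemma at all, stating it purely as a citation to Broder's min-wise hashing result, with only an informal description of the sketch-union mechanics in the surrounding text. Your argument is the standard self-contained justification, and it holds up: bottom-$h$ selection under a uniformly random permutation $\pi$ yields a uniformly random $h$-subset $R$ of $A\cup B$; the merge-and-truncate identity shows $s(A\cup B)$ is recoverable from $s(A)$ and $s(B)$; for $x\in R$, membership in $s(A)\cap s(B)$ is equivalent to membership in $A\cap B$ (your a fortiori rank argument is exactly right, since the rank of $\pi(x)$ within $A$ or $B$ can only be smaller than its rank within $A\cup B$); hence the ratio in the estimator is the fraction of a uniform without-replacement sample landing in $A\cap B$, whose expectation equals $|A\cap B|/|A\cup B|$ by linearity, the denominator $|s(A\cup B)|=\min(h,|A\cup B|)$ being deterministic so that no ratio-of-random-variables issue arises. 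Your remarks on the harmlessness of the dependence induced by sampling without replacement, and on the boundary case $|A|<h$ where the sketch retains everything, flag precisely the right subtleties. What the paper's bare citation buys is brevity; what your reduction buys is a verification that the paper's concrete instantiation (hashing edge positions in $[m]$ by a permutation and keeping the $h$ smallest values) actually satisfies the hypotheses of Broder's theorem, a step the paper leaves entirely implicit.
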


Using the estimated Jaccard distance between an edge stream $\edgestream(v)$ and a substream $S_j$, we decide if we should add $\edgestream(v)$ to $S_j$. If the estimated Jaccard distance is low, then we expect that adding $\edgestream(v)$ to $S_j$ does not lead to a significant increase in the size of $S_j$.

We now describe how we compute and use the sketches of the edge streams.
During the computation of $\edgestream(v_i)$, the algorithm iterates over the input stream $S$, starting from position $p$ determined by edge skipping array, and processes the edges $e_p,\ldots,e_\ell,\ldots,e_m$ in chronological order.
Let $e_\ell=(u,v,t,\lambda)$ be an edge that can be traversed, i.e., the arrival time at $u$ is smaller or equal to $t$.
We compute a bottom-$h$ sketch using the hashed position $\pi(\ell)$ of edge $e_\ell$ in the input stream.
Therefore, we compute a hash value for all edges that can be traversed, and we keep the $h$ smallest hashed values $\pi_1,\ldots, \pi_h$ as our sketch $s(\edgestream(v_i))=(\pi_1,\ldots,\pi_h)$, where the hash function $\pi$ is a permutation of $[m]$. 
Note that the position $\ell$ of edge $e_\ell$ in the edge stream $S$ is a unique identifier of $e_\ell$. 
Furthermore, each edge $e_\ell=(u,v,t,\lambda)$ represents a substream of $S$ consisting of all edges in $e=(x,y,t_e,\lambda_e)\in\edgestream(v)$ with availability time $t_e\geq t+\lambda$, i.e., the corresponding subgraph that is reachable after traversing $e$.

In the assignment phase (line \ref{alg:hashalg:for1}), \Cref{alg:improved} proceeds similarly to \Cref{alg:greedy} in a greedy fashion. However, we adapt the assignment objective such that it leads to improved substreams in terms of size and query times. 
To this end, we consider the number of vertices $I_j$ assigned to substream $S_j$. 

\begin{definition}
    Let $v\in V(S)$, and $I_j=|\{v\in V(S)\mid f(v)=S_j\}|$ the number of to $S_j$ assigned vertices. We define the \emph{ranking function} 
    $r:V(S)\times [k]\rightarrow [1,n]$ as 
    \begin{center}
        $r(v, j) = \frac{1}{2}(I_j +1)\cdot(\hat{J}(s(S_j), s(\edgestream(v))+1)$.
    \end{center}
\end{definition}

Using the ranking function, \Cref{alg:improved} decides to add the edge stream $\edgestream(v)$ to the substream $S_j$ if $r(v,j)$ is minimal for $1\leq j \leq k$ (line \ref{alg:hashalg:findi}). 
By additionally considering the number $I_j$ of vertices assigned to substream $S_j$, we optimize for small substreams $S_i$ and a vertex assignment such that vertices are assigned to smaller substreams $S_i$ rather than to larger ones. Note that if a vertex $u$ is assigned to a small substream, queries starting at $u$ can be answered fast.
If the ranking function $r(v,j)$ is close to one, not many vertices are assigned to $j$, or the estimated Jaccard distance between $\edgestream(v)$ and $S_j$ is small. On the other hand, if $r(v,j)$ is closer to $n$, the number of to $S_j$ assigned vertices is high, and/or the estimated Jaccard distance is high.
The intuition is that, even if we have a substream that contains a majority of edges, we want to assign the remaining vertices to substreams with a smaller size if possible. 
\subsubsection{Parallelization} %
\Cref{alg:improved} shows our improved \emph{parallel} greedy algorithm that has as input the temporal graph $S$, the number of substreams $k$, the hash-size $h$, and a batch-size $B>0$.
After the initialization and the computation of the edge skipping array,  
it processes the input graph in batches of size $B$ to allow a parallel computation of the edge stream assignment. 
The batch size determines how many vertices are processed in each iteration of the outer while-loop (line~\ref{alg:hashalg:while}).
For each batch of vertices, \Cref{alg:improved} runs three phases of computation.
In the first phase (line \ref{alg:hashalg:for0}), \Cref{alg:improved} first computes the edge streams $\edgestream(v_i)$ for all vertices $v_i$ that part of the current batch. The edge skipping array is used to find the first position $1\leq p\leq m$ of $v_i$ in $S$.
The second phase computes an assignment of the edge streams to the substreams using the bottom-$h$ sketches.
To this end, we keep the sketches $s(S_j)$ of the substreams stored as $C_j$ for each $j\in [k]$. 
After finding the right substream,
$C_j$, $I_j$ and $f(v_i)$ are updated accordingly.
The third phase (line \ref{alg:hashalg:for2}) constructs the substreams in parallel using the determined assignment of edge streams.
Finally, after all batches are processed, edge skipping arrays for each $S_i$ are computed in parallel (line \ref{alg:hashalg:timeskips}).
\begin{theorem}
    Given a temporal graph in edge stream representation with $m$ edges and $n\leq m$ vertices, and $B,h,k\in \mathbb{N}$ with 
    $h\geq1$, $k\geq2$, and $h\cdot k\leq m$. Then, the running time of \Cref{alg:improved} is in $\mathcal{O}(\frac{knm}{P})$ on a parallel machine\footnote{We consider the \emph{Concurrent Read Exclusive Write} (CREW) PRAM model.} with $P$ processors, for $P\leq k$ and $P\leq m$.
\end{theorem}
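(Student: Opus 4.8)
The plan is to analyze \Cref{alg:improved} phase by phase, summing the work done per batch and then multiplying by the number of batches $\lceil n/B\rceil$, while tracking which steps parallelize across the $P\leq k$ processors. First I would account for the preprocessing: initialization is $\mathcal{O}(n)$ and the edge-skipping array is computed with a single pass in $\mathcal{O}(n+m)$ (as established in the edge-skipping subsection), neither of which dominates. The core argument rests on bounding the three per-batch phases. In phase one, each of the $B$ vertices in the batch requires a single pass over $S$ to compute $\edgestream(v_i)$ together with its bottom-$h$ sketch, costing $\mathcal{O}(m)$ per vertex and $\mathcal{O}(Bm)$ for the batch; crucially these $B$ computations are independent and can be distributed over the $P$ processors, giving $\mathcal{O}(Bm/P)$.

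Next I would handle phase two, the sketch-based assignment. For each vertex in the batch we evaluate the ranking function $r(v,j)$ against all $k$ substreams, and each evaluation estimates a Jaccard distance from bottom-$h$ sketches in $\mathcal{O}(h)$ time (merging two size-$h$ sketches and intersecting is linear in $h$). This is $\mathcal{O}(kh)$ per vertex. Here the hypothesis $h\cdot k\leq m$ is exactly what lets me absorb the assignment cost $\mathcal{O}(Bkh)$ into the dominant term, since $kh\leq m$ means this is $\mathcal{O}(Bm)$ and does not exceed phase one; I would note that the assignment is done greedily and sequentially within the batch (to keep the $C_j$, $I_j$ updates consistent), so this phase is not parallelized, but it is still bounded by the streaming work. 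Phase three constructs the substreams from the fixed assignment in parallel across the $P$ processors, with total work $\mathcal{O}(Bm)$ over the batch and hence $\mathcal{O}(Bm/P)$.

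Combining the phases, the per-batch cost is $\mathcal{O}(Bm/P + Bkh)$, and using $kh\leq m$ this simplifies to $\mathcal{O}(Bm)$ of total work spread over $P$ processors, i.e. $\mathcal{O}(Bm/P)$ dominated by phases one and three (with the sequential phase two being lower order under the stated constraint). Multiplying by the $\lceil n/B\rceil$ batches yields $\mathcal{O}\!\left(\frac{n}{B}\cdot\frac{Bm}{P}\right)=\mathcal{O}\!\left(\frac{nm}{P}\right)$ for the dominant passes. Finally I would check the post-processing in line~\ref{alg:hashalg:timeskips}: computing an edge-skipping array for each of the $k$ substreams is $\mathcal{O}(km)$ total work, parallelizable to $\mathcal{O}(km/P)$ for $P\leq k$, which matches the claimed $\mathcal{O}(knm/P)$ bound and in fact is subsumed by it since $n\geq1$. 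The main obstacle I anticipate is arguing cleanly about the sequential assignment phase (phase two): I must verify that under $h\cdot k\leq m$ its cost never dominates despite not being parallelized, and that the greedy dependency between consecutive vertices in a batch genuinely forces it to be sequential rather than an artifact—this is where the constraint $h\cdot k\leq m$ does the real work and where a careless bound would either lose the parallel speedup or violate the CREW exclusive-write assumption on the shared counters $C_j$ and $I_j$.
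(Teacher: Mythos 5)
Your overall decomposition (preprocessing, three per-batch phases, multiplied by $\lceil n/B\rceil$ batches, plus the final edge-skipping arrays) is the same as the paper's, and your handling of phases one and two is acceptable: the paper parallelizes the phase-two search over the $k$ substreams, obtaining $\mathcal{O}(B(kh/P+\log P))$ per batch, whereas you keep it sequential at $\mathcal{O}(Bkh)$ and absorb it via $hk\le m$; either route suffices for the stated bound. The genuine flaw is your phase-three claim. You assert that constructing the substreams has total work $\mathcal{O}(Bm)$ per batch ``and hence'' parallel time $\mathcal{O}(Bm/P)$. That inference fails for the algorithm as written: the parallel loop in line~\ref{alg:hashalg:for2} of \Cref{alg:improved} is over the $k$ substreams, and all unions into a fixed $S_j$ are inherently sequential (each union depends on the result of the previous one, and concurrent writes to $S_j$ would violate the exclusive-write assumption you yourself invoke). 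Since the greedy assignment can be arbitrarily skewed---in the worst case every vertex of a batch is mapped to the same substream---a single processor may have to perform all $\mathcal{O}(Bm)$ union work, so the worst-case parallel time of phase three is $\Theta(Bm)$, not $\mathcal{O}(Bm/P)$. The same overstatement infects your summary: the sequential phase-two cost $\mathcal{O}(Bkh)$ is bounded by $\mathcal{O}(Bm)$ using $hk\le m$, but that is \emph{not} lower order than $\mathcal{O}(Bm/P)$, so your claimed per-batch total of $\mathcal{O}(Bm/P)$ is unsupported on two counts.

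This is precisely where the paper's factor $k$ originates: the paper charges the update phase at $\mathcal{O}((k/P)\cdot Bm)$ per batch (each processor handles up to $k/P$ substreams, each costed at the worst case $\mathcal{O}(Bm)$), and summing over batches yields the dominant term $\mathcal{O}(knm/P)$. Your final conclusion happens to survive your error, because the corrected per-batch cost $\mathcal{O}(Bm)$ sums to $\mathcal{O}(nm)\le\mathcal{O}(knm/P)$ using $P\le k$; but as written, your argument purports to establish the strictly stronger bound $\mathcal{O}(nm/P+km/P)$, which it does not. To repair the proof, replace the phase-three bound by the worst-case $\mathcal{O}(Bm)$ per batch (or the paper's $\mathcal{O}((k/P)Bm)$), note the same for the sequential phase two, and then invoke $P\le k$ to conclude $\mathcal{O}(knm/P)$; alternatively, you would have to design and justify a load-balanced parallel merging scheme for the unions, which \Cref{alg:improved} does not provide.
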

\begin{proof}
    Initialization is done in $\mathcal{O}(n)$.
    Computing the initial \emph{Time Skip} index for the input $S$ takes $\mathcal{O}(m)$ time.
    The algorithm iterates over $\lceil n/B \rceil$ batches.
    In one iteration of the while loop, the running time for the parallel computation of the edge sets $\mathcal{E}(v_i)$ is in $\mathcal{O}(B \cdot (n+m) / P)$.
    For the bottom-$h$ sketch, we use a sorted list to keep the smallest $h$ hash values of the edges. Updating the list is done in $\log h$.
    Finding the indices $i$ for the substreams $S_i$ in line \ref{alg:hashalg:findi} takes $\mathcal{O}(B\cdot ((kh/P)+\log P))$ time.
    Therefore, the total running time of the first two phases is 
    $\lceil n/B \rceil\cdot B \cdot (\mathcal{O}(m/P) + \mathcal{O}(kh/P+\log P))=\mathcal{O}(\frac{nm}{P}+n\log P)$.
    The total running time of the update phase is 
    $\lceil n/B \rceil \cdot (\mathcal{O}(\frac{k}{P}\cdot Bm)=\mathcal{O}(\frac{k}{P}nm)$.
    Computing the {edge skipping} arrays for $S_i$ with $i\in[k]$ in parallel takes $\mathcal{O}(\frac{k}{P}\cdot m)$ time.
\end{proof}

\begin{algorithm2e}[tb]\label[algorithm]{alg:improved}
    \small
    \caption{Parallel index computation.}
    \Input{Temporal edge stream $S$, $k, B\in\mathbb{N}$.} 
    \Output{Substream index $(\mathcal{S}, f)$}
    \BlankLine
    \SetNoFillComment

    Initialize in \ForParallel $I_i=0$, $S_i=\emptyset$, $C_i=\emptyset$ for $i\in [k]$, and $f(v)=0$ for $v\in V_S$ \\
    $start\gets 1$, $end\gets B$\\
    Compute edge skipping array for $S$\\
    \While{$start < n$}{ \label{alg:hashalg:while} 
        \tcc{Phase 1: compute streams \& sketches}
        \ForParallel \For{$i=start,\ldots, end$ } {\label{alg:hashalg:for0} 
            compute $\edgestream(v_i)$ and $s(\edgestream(v_i))$ using initial edge skipping array\\
        }
        \tcc{Phase 2: compute stream assignments}
        \For{$i=start,\ldots, end$ }{\label{alg:hashalg:for1} %
            \lIf{$\edgestream(v_i)=\emptyset$}{$f(v_i)\gets 0$}
            \Else{
                in \ForParallel find $j \in [k]$
                such that $r(v_i,j)$ is minimal\\\label{alg:hashalg:findi}
                $C_j \gets s(C_j \cup s(\edgestream(v_i)))$ \label{alg:hashalg:cjupdate}\\
                $f(v_i)\gets j$\\
                $I_j \gets I_j + 1$
            }
        }
        \tcc{Phase 3: updating substreams}
        \ForParallel \For{$j=1,\ldots, k$ } {\label{alg:hashalg:for2}
            \For{$v$ with $f(v)=j$} {
                $S_j \gets S_j \cup \edgestream(v)$\\
            }
        }
        $start\gets start + B$\\ 
        $end\gets \min(end + B,n)$\\
    }
    Compute in \ForParallel edge skipping array for $S_i$, $i\in[k]$ \label{alg:hashalg:timeskips}\\
    \Return{$(\{S_1,\ldots,S_k\}, f)$}
\end{algorithm2e}

\section{Experimental Results}\label{sec:experiments}
We implemented our algorithms in C++ using GNU CC Compiler 9.3.0. with the flag \texttt{--O3}, and we used OpenMP v4.5.
The source code is available at \url{https://gitlab.com/tgpublic/tgindex}.
The experiments ran on a computer cluster, where each experiment had an exclusive node with an Intel(R) Xeon(R) Gold 6130 CPU @ 2.10GHz and 192 GB of RAM. The time limit for each experiment was set to 48 hours.
Please refer to \Cref{appendix:experiments} for further experimental results.

\smallskip
\noindent\textbf{Algorithms:} We use the following algorithms.
\begin{compactitem}%
    \item \textsc{Greedy} is the implementation of \Cref{alg:greedy}.
    \item \textsc{SubStream} is the implementation \Cref{alg:improved}. 
    \item \textsc{TopChain} is the state-of-the-art index for single-source-single-destination (SSSD) temporal reachability queries~\cite{wu2016reachability}. We set the parameter $k=5$ as suggested in~\cite{wu2016reachability}.
    \item \textsc{OnePassFP} is temporal closeness algorithm based on the state-of-the-art SSAD edge stream algorithm for minimum duration distances~\cite{wu2014path}.
    \item \textsc{Top}-$\ell$ is the state-of-the-art temporal closeness algorithm~\cite{oettershagen2020efficient}. It computes the topmost $\ell$ closeness values and vertices exactly.  
    We set $\ell=100$.
\end{compactitem}
The C++ source codes of \textsc{TopChain}, \textsc{OnePassFp}, and \textsc{Top}-$\ell$ were provided by the corresponding authors and compiled using the same settings as our algorithms.

\smallskip
\noindent\textbf{Data sets:}~
We used the following real-world temporal graphs.
(1) \emph{Infectious:} a face-to-face human contact network~\cite{Isella2011}.
(2) \emph{AskUbuntu:} Interactions on the website \emph{Ask Ubuntu}~\cite{paranjape2017motifs}.
(3) \emph{Prosper:} A temporal network based on a personal loan website~\cite{redmond2013temporal}.
(4) \emph{Arxiv:} An author collaboration graph from the \emph{arXiv} website~\cite{leskovec2007graph}.
(5) \emph{Youtube:} A social network on the video platform \emph{Youtube}~\cite{MisloveMGDB07}.
(6) \emph{StackOverflow:} Interactions on the website \emph{StackOverflow}~\cite{paranjape2017motifs}.
\Cref{table:datasets_stats2} shows statistics of the data sets.

\begin{table}
    \centering
    \caption{Statistics of the data sets. } 
    \label{table:datasets_stats2}
    \resizebox{1\linewidth}{!}{ 	\renewcommand{\arraystretch}{0.9}
        \begin{tabular}{lrrrrr}
            \toprule
            \multirow{4}{0.5cm}{\vspace*{4pt}\textbf{Data~set}\vspace*{4pt}}&\multicolumn{5}{c}{\textbf{Properties}}\\
            \cmidrule{2-6} 
            \textbf{ }          & $|V|$          &  $|\tge|$      & $|\mathcal{T}(\tg)|$ &  avg.~$|\edgestream(v)|$  & $\max|\edgestream(v)|$  \\ \midrule
            \emph{Infectious}   & $10\,972$      & $415\,912$     &  $76\,943$     &  1\,100.1   & 9\,339 \\ 
            \emph{AskUbuntu}    & $159\,316$     & $964\,437$     &  $960\,866$    &  3\,050.8   & 117\,930 \\
            \emph{Prosper}      & $89\,269$      & $3\,394\,978$  &  $1\,259$      &  14\,979.4  & 205\,461 \\
            \emph{Arxiv}        & $28\,093$      & $4\,596\,803$  &  $2\,337$      &  260\,471.5 & 3\,860\,987 \\
            \emph{Youtube}      & $3\,223\,585$  & $9\,375\,374$  &  $203$         &  136\,682.2 & 4\,928\,847\\ 
            \emph{StackOverflow}& $2\,464\,606$ & $17\,823\,525$ & $16\,926\,738$  &  851\,232.8 & 11\,982\,619.0 \\
            \bottomrule
        \end{tabular}
    }
\vspace{-2mm}
\end{table}
\begin{table*}[t]\centering
    \caption{ Indexing times and sizes. We report the mean and standard deviation~over ten runs for \textsc{SubStream}.} 
    \begin{subtable}{\linewidth}
        \centering
        \caption{ Indexing times in (s). OOT---Out of time.} 
        \label{table:runningtimesindexing}
        \resizebox{0.95\textwidth}{!}{ 	\renewcommand{\arraystretch}{0.825}
            \begin{tabular}{lrrrr@{\hspace{5mm}}rrrr@{\hspace{5mm}}rr}
                \toprule
                \multirow{3}{0.5cm}{\textbf{Data~set}}&\multicolumn{4}{c}{\textsc{Greedy}}&\multicolumn{4}{c}{\textsc{SubStream}}&\multicolumn{1}{c}{\textbf{}}\\
                \textbf{ }         & $k=32$ & $k=64$ & $k=128$ & $k=256$ &  $k=32$ & $k=64$ & $k=128$ & $k=256$ & \textsc{TopChain} \\  
                \cmidrule(r){1-1} \cmidrule(lr){2-5}
                \cmidrule(lr){6-9}
                \cmidrule(lr){10-10}
                \emph{Infectious} & $4.4 $ & $3.0 $ & $2.4 $ & $2.6 $ & $0.75 \sz\pm 000.0$ & $0.78 \sz\pm 000.0$ & $0.79 \sz\pm 000.0$ & $0.78 \sz\pm 000.0$  & $0.43$  \\    
                \emph{AskUbuntu} & $201.9 $ & $216.7 $ & $211.7 $ & $229.3 $ & $6.43 \sz\pm 000.2$ & $6.33 \sz\pm 000.2$ & $6.09 \sz\pm 000.2$ & $6.10 \sz\pm 000.2$ & $0.53$  \\  
                \emph{Prosper} & $440.2 $ & $427.8 $ & $427.2 $ & $486.8 $ & $18.44 \sz\pm 000.7$ & $17.30 \sz\pm 000.4$ & $16.70 \sz\pm 000.3$ & $16.58 \sz\pm 000.3$ &  $1.98$  \\  
                \emph{Arxiv} & $2824.2 $ & $2895.9 $ & $3214.2 $ & $4223.6 $ & $28.29 \sz\pm 000.8$ & $25.52 \sz\pm 001.1$ & $22.56 \sz\pm 001.1$ & $21.38 \sz\pm 002.9$ &  $0.78$  \\  
                \emph{Youtube} & OOT & OOT & OOT & OOT & $5376.90 \sz\pm 573.1$ & $4813.78 \sz\pm 088.4$ & $4572.23 \sz\pm 068.6$ & $4360.68 \sz\pm 038.5$ & $12.46$ \\  
                \emph{StackOverflow} & OOT & OOT & OOT & OOT & $12467.22 \sz\pm 105.7$ & $11599.58 \sz\pm 235.1$ & $11307.12 \sz\pm 234.0$ & $11051.26 \sz\pm 184.7$ & $45.95$ \\
            \bottomrule
            \end{tabular}
        }	
    \end{subtable}
    \begin{subtable}{\linewidth}
        \centering
        \caption{ Index sizes in MiB. (``--'' indicates that the index is not available due to time out during construction).} 
        \label{table:index_sizes}
        \resizebox{0.95\textwidth}{!}{ 	\renewcommand{\arraystretch}{0.825}
            \begin{tabular}{lrrrrrrrrr}
                \toprule
                \multirow{3}{0.5cm}{\textbf{Data~set}}&\multicolumn{4}{c}{\textsc{Greedy}}&\multicolumn{4}{c}{\textsc{SubStream}}& \\
                \textbf{ }         & $k=32$ & $k=64$ & $k=128$ & $k=256$ &  $k=32$ & $k=64$ & $k=128$ & $k=256$ & \textsc{TopChain} \\  
                \cmidrule(r){1-1} \cmidrule(lr){2-5}
                \cmidrule(lr){6-9}
                \cmidrule(lr){10-10}
                
                \emph{Infectious} & $2.6 $ & $2.9 $ & $3.9 $ & $5.5 $ & $7.79 \sz\pm 000.4$ & $5.92 \sz\pm 000.3$ & $5.05 \sz\pm 000.2$ & $4.88 \sz\pm 000.1$ & $28.49$  \\
                \emph{AskUbuntu} & $15.2 $ & $29.1 $ & $56.3 $ & $106.9 $ & $9.36 \sz\pm 001.7$ & $10.41 \sz\pm 001.7$ & $9.87 \sz\pm 001.3$ & $12.67 \sz\pm 000.8$  & $20.41$  \\  
                \emph{Prosper} & $34.3 $ & $54.8 $ & $94.1 $ & $169.7 $ & $33.98 \sz\pm 001.4$ & $47.34 \sz\pm 002.3$ & $66.20 \sz\pm 005.7$ & $79.27 \sz\pm 007.9$  & $65.42$  \\
                \emph{Arxiv} & $457.2 $ & $883.8 $ & $1674.3 $ & $3123.6 $ & $295.98 \sz\pm 028.1$ & $459.46 \sz\pm 046.5$ & $572.59 \sz\pm 062.2$ & $721.02 \sz\pm 041.4$  & $18.86$  \\
                \emph{Youtube} & -- & -- & -- & -- & $541.53 \sz\pm 037.5$ & $863.72 \sz\pm 100.0$ & $1280.67 \sz\pm 147.9$ & $2038.11 \sz\pm 296.5$ & $351.03$ \\
                \emph{StackOverflow} & -- & -- & -- & -- & $907.51 \sz\pm 82.2$ & $1366.65 \sz\pm 111.7$ & $2201.15 \sz\pm 217.0$ & $3644.07 \sz\pm 375.5$  & $1501.00$ \\
                \bottomrule
            \end{tabular}
        }	
    \end{subtable}
\end{table*}

\subsection{Indexing Time and Index Size}
\label{sec:indexing}
For \textsc{SubStream}, we set the number of substreams to $k=2^i$ for $i\in\{5,\ldots,8\}$. 
We choose a sketch size of $h=8$ because in our experiments if showed a good trade-off between index construction times, query times,
and resulting index sizes. The construction time increases for larger values of $h$, however the gain in construction and query times diminished.
We set the batch size $B$ to $n$ for data sets with less than one million vertices and $2048$ otherwise. 
Furthermore, we used $32$ threads.
We report the indexing times in \Cref{table:runningtimesindexing} and the index sizes in \Cref{table:index_sizes}.
For \textsc{SubStream}, we run the indexing ten times and report the averages and standard deviations.

\smallskip
\noindent
\textbf{Indexing time:}
As expected, \textsc{Greedy} has high running times.
It has up to several orders of magnitude higher running times than the other indices, and for the two largest data sets, \emph{Youtube} and \emph{StackOverflow} the computations could not be finished in the given time limit of 48 hours.
\textsc{SubStream} improves the indexing time of \textsc{Greedy} immensely for all data sets.
However, \textsc{SubStream} has higher indexing times than \textsc{TopChain} for all data sets. 
The indexing time of \textsc{TopChain} is linear in the graph size, and the indexing for \textsc{SubStream} computes for each vertex $v\in V$ all reachable edges $\edgestream(v)$, hence higher running times and weaker scalability of \textsc{SubStream} are expected. 
However, the query times using \textsc{TopChain} for SSAD queries cannot compete with our indices and are, in most cases, orders of magnitude higher. The reason is that \textsc{TopChain} is designed for SSSD queries.

\smallskip
\noindent
\textbf{Index size:}
\Cref{table:index_sizes} shows that the index sizes of \textsc{Greedy} are only smaller than the ones of \textsc{SubStream} for the \emph{Infectious} data set. In all other cases, the \textsc{SubStream} size are (substantially) smaller.
Compared to \textsc{TopChain}, our \textsc{SubStream} can lead to larger sizes depending on $k$.
However, for \emph{Infectious} and \emph{AskUbuntu} the sizes of \textsc{SubStream} are smaller sizes for all $k$.
In general larger values of $k$ lead to larger indices, and shorter query times. %
Hence, \textsc{SubStream} provides a typical trade-off between index size and query time.
This is also demonstrated in \Cref{fig:tradeoff_stackoverflow}, which shows the trade-off for the two largest data sets, \emph{Youtube} and \emph{StackOverflow}, for increasing number of substreams $k$.
\begin{figure}
    \centering
    \includegraphics[width=0.35\linewidth]{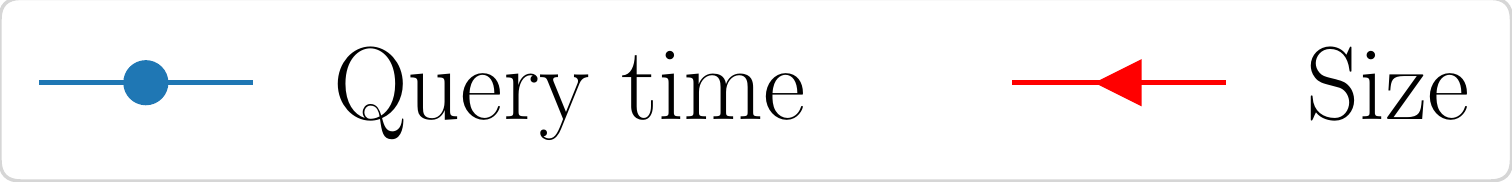}
    \includegraphics[width=0.45\linewidth]{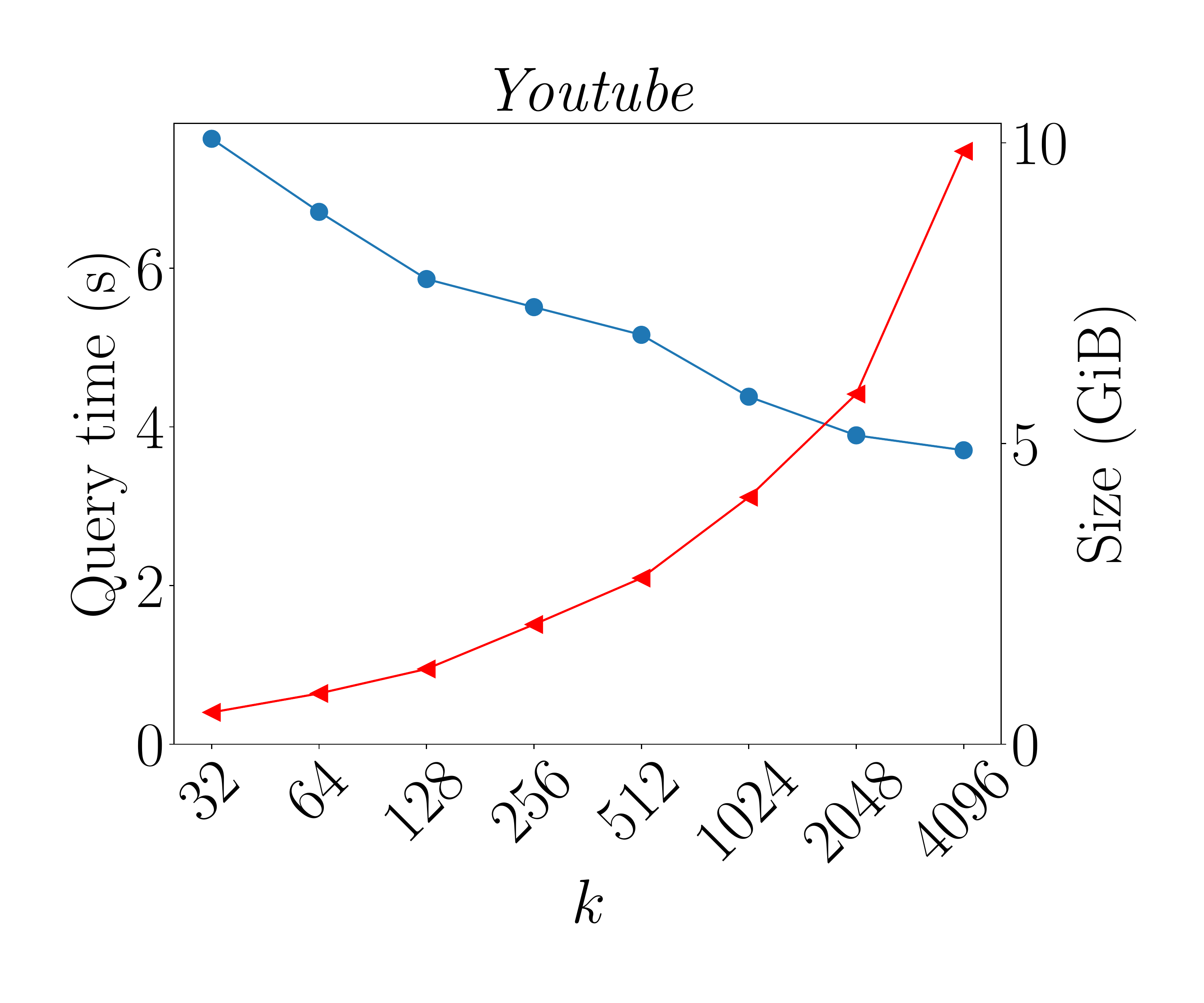}%
    \vspace{-2mm}%
    \includegraphics[width=0.45\linewidth]{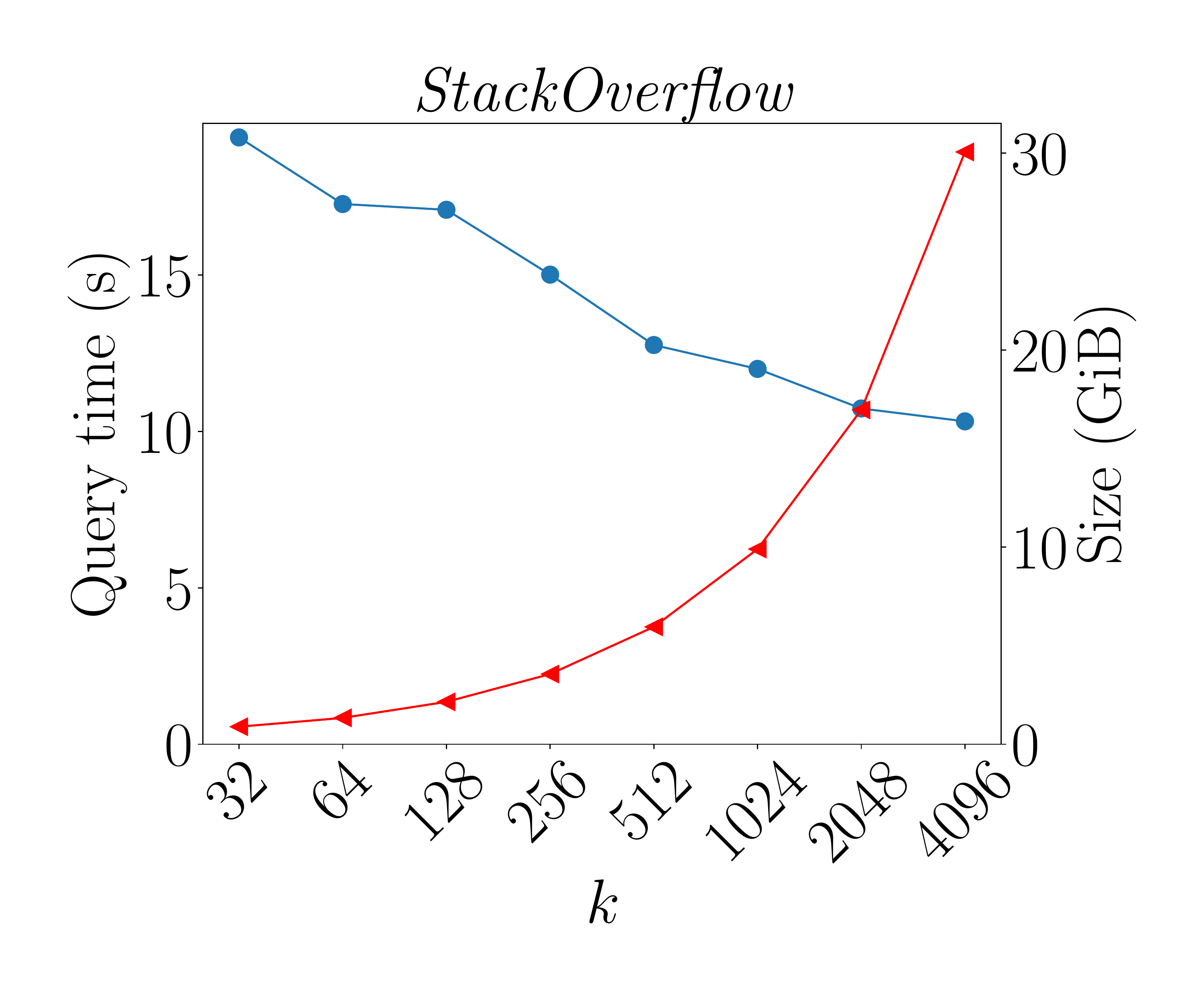}
    \caption{The trade-off between index size and query time for increasing $k$ for \emph{Youtube} and \emph{StackOverflow}.}
    \label{fig:tradeoff_stackoverflow}
    \vspace{-5mm}
\end{figure}

\subsection{Temporal Closeness Computation}\label{sec:usecase}

\begin{table}[t]
    \centering
    \caption{Running times for the temporal closeness in seconds and hours. OOT---{Out of time} after 7 days.} 
    \label{table:rtcloseness}
    \resizebox{1\linewidth}{!}{ 	\renewcommand{\arraystretch}{0.95}\setlength{\tabcolsep}{3mm}
        \begin{tabular}{lrrrr}%
            \toprule
            \textbf{ }           & \multicolumn{2}{c}{\textsc{SubStream}} & \multicolumn{2}{c}{\textsc{Baselines}}  \\  
            \textbf{\textbf{Data~set} } & $k=256$ &     $k=2048$   & \textsc{OnePassFp}  & \textsc{Top}-$100$  \\\midrule    
            \emph{Infectious}    &   1.67 s &   \textbf{  1.51} s  &   12.06 s     & 2.25 s   \\ 
            \emph{AskUbuntu}     & 102.95 s &   \textbf{102.46} s  &   229.73 s    & 132.53 s \\
            \emph{Prosper}       & 130.63 s &   \textbf{109.33} s  &   1\,665.20 s & 260.87 s \\ 
            \emph{Arxiv}         & 314.73 s &   \textbf{286.86} s  &   630.60 s    & 398.50 s \\
            \emph{Youtube}       &  63.82 h &   \textbf{ 59.72} h  &  145.98 h     & 81.21 h  \\ 
            \emph{StackOverflow} &  88.00 h &   \textbf{ 86.49} h  &  OOT          & 107.66 h \\
            \bottomrule
        \end{tabular}
    }	
    \vspace{-4mm}
\end{table}
We set $k\in\{256,2048\}$ and $h=8$ for \textsc{SubStream}. 
\Cref{table:rtcloseness} shows the running times. For the \textsc{SubStream} index, the running times \textbf{include the index construction times}. 
Our indices improve the running times for all data sets. Large improvements are gained for the \emph{Infectious} and \emph{Prosper} data sets compared to \textsc{OnePassFp} with speed-ups of eight and $15$, respectively. 
The speed-up for the other data sets is at least $2.2$ compared to \textsc{OnePassFp}. 
\textsc{OnePassFp} could not compute the ranking for \emph{StackOverflow} in the given time limit of seven days.
Compared to the \textsc{Top}-$100$ algorithm, the speed-up is between 1.2 (\emph{StackOverflow}) and 2.3 (\emph{Prosper}) with an average speed-up of $1.5$. Note that in contrast to \textsc{Top}-$100$, \textsc{SubStream} \emph{computes the complete ranking of all vertices}. 
As expected, the running time with $k=2048$ is shorter compared to $k=256$, even though the indexing times for $k=2048$ are slightly higher. %
The high speed-ups in the case of \emph{Infectious} and \emph{Prosper} can be explained by the small average and maximal sizes of $\edgestream(v)$ for both data sets (see~\Cref{table:datasets_stats2}), leading to an on average small maximum number of edges in the substreams of only $2.5\%$ and $5.8\%$ of the total number of edges. %
In conclusion, our \textsc{SubStream} index significantly improves the running times compared to the state-of-the-art algorithms for temporal closeness rankings, while computing the ranking of all vertices.

\section{Related Work}\label{sec:relatedwork}
\Cref{appenx:rw} provides further related work.
Recent and comprehensive introductions to temporal graphs are provided in, e.g.,~\cite{holme2015modern,wang2019time}. 
Wu et al.~\cite{wu2014path} introduce streaming algorithms for finding the fastest, shortest, latest departure, and earliest arrival paths. 
In~\cite{tang2013applications} and~\cite{tang2010analysing}, the authors compare temporal distance metrics and temporal centrality measures to their static counterparts. They reveal that the temporal versions for analyzing temporal graphs have advantages over static approaches on the aggregated graphs.
Variants of temporal closeness have been introduced in~\cite{pan2011path,magnien2015time,santoro2011time}.
Our work uses the harmonic temporal closeness definition from~\cite{oettershagen2020efficient}.
As far as we know, our work is the first one examining indices for \emph{SSAD temporal distance queries} and its application for temporal closeness computation.
Yu and Cheng~\cite{YuC10} give an overview of indexing techniques for reachability and distances in static graphs.
There are several works on SSSD time-dependent routing in transportation networks, e.g.,~\cite{BatzDSV09,Delling11}. %
Wang et al.~\cite{wang2015efficient} propose \emph{Timetable Labeling (TTL}), a labeling-based index for SSSD reachability queries based on hub labelings for temporal graphs.
In~\cite{wu2016reachability}, the authors introduce an index for SSSD reachability queries in temporal graphs called \textsc{TopChain}. The index uses a static representation of the temporal graph as a directed acyclic graph (DAG). 
On the DAG, a chain cover with labels at the vertices is computed. The labeling can be used to determine the reachability between vertices.
\textsc{TopChain} is faster than TTL and has shorter query times (see \cite{wu2016reachability}).
As far as we know, our work is the first one discussing indices for SSAD temporal distance queries.

\section{Conclusion and Future Work}
We introduced the \emph{Substream} index for speeding up temporal closeness computation.
Our index speeds up the vertex-ranking according to the temporal closeness up to one order of magnitude.
It can be extended to support efficiently dynamic updates for edge insertions or deletions. %
In future work, we want to further improve the indexing time using a distributed algorithm, and to explore further applications for our new index.

\subsection*{Acknowledgements}
This work is funded by the Deutsche Forschungsgemein\-schaft (DFG, German Research Foundation) under Germany's Excellence Strategy--EXC-2047/1--390685813.
This research has been funded by the Federal Ministry of Education and Research of Germany and the state of North-Rhine Westphalia as part of the Lamarr-Institute for Machine Learning and Artificial Intelligence, LAMARR22B.

\appendix
\section{Additional Experimental Results}\label{appendix:experiments}
In this section, we provide additional experimental results.
We use the following \textbf{additional} algorithms.
\begin{compactitem}
    \item \textsc{OnePass} is the SSAD edge stream algorithm for earliest arrival times~\cite{wu2014path}.
    \item \textsc{Dl} is a straight forward \emph{Dijkstra}-like approach using an adjacency lists representation of the temporal graph.
    \item \textsc{Xuan} is the algorithm for SSAD earliest-arrival paths from~\cite{xuan2003computing}. 
    \item \textsc{LabelFp} is a label setting algorithm for SSAD fastest paths using an adjacency list representation~\cite{oettershagen2020efficient,oettershagen2022computing}.
\end{compactitem}
\textsc{OnePass}, \textsc{OnePassFp}, and \textsc{LabelFp} were provided by the corresponding authors. 
We implemented \textsc{Xuan} using the graph data structure and algorithm described in~\cite{xuan2003computing}.

\subsection{Querying Time}\label{sec:querytime}
\begin{table*}[t]
    \centering
    \caption{Total querying time for 1000 queries in seconds. For \textsc{SubStream}, the mean and standard deviations over ten indices (``--'' indicates that the index is not available due to time out during construction).} 
    \label{table:reachquerysubstream}
    \begin{subtable}{1\linewidth}\centering
        \caption{Earliest arrival queries (OOT--out of time).} %
        \label{table:reachquerysubstreamea}
        \resizebox{1\linewidth}{!}{ 	\renewcommand{\arraystretch}{0.95}
            \begin{tabular}{lrrrrrrrrrrrrr}
                \toprule
                \multirow{4}{0.5cm}{\vspace*{4pt}\textbf{Data~set}\vspace*{4pt}}&\multicolumn{4}{c}{\textsc{Greedy}}&\multicolumn{4}{c}{\textsc{SubStream}}& \multicolumn{4}{c}{\textbf{Baselines}}\\
                \textbf{ }        &  $k=32$ & $k=64$ & $k=128$ & $k=256$ &  $k=32$ & $k=64$ & $k=128$ & $k=256$ &  \textsc{OnePass} &  \textsc{Dl} & \textsc{Xuan} & \textsc{TopChain} \\ 
                \cmidrule(r){1-1} \cmidrule(lr){2-5} \cmidrule(lr){6-9} \cmidrule(lr){10-10} \cmidrule(lr){11-11} \cmidrule(lr){12-12} \cmidrule(l){13-13}
                \emph{Infectious} & $0.014 $ & $0.013 $ & $0.011 $ & $\textbf{0.009} $ & $0.030 \sz\pm 0.0$ & $0.017 \sz\pm 0.0$ & $0.012 \sz\pm 0.0$ & $\textbf{0.009} \sz\pm 0.0$ &  $0.235$ & $0.023$ & $0.034$ & $0.144$\\ 
                \emph{AskUbuntu} & $0.069 $ & $0.073 $ & $0.079 $ & $0.078 $ & $0.063 \sz\pm 0.0$ & $0.057 \sz\pm 0.0$ & $0.054 \sz\pm 0.0$ & $\textbf{0.052} \sz\pm 0.0$ & $0.244$ & $0.482$ & $1.200$ & $10.643$\\ 
                \emph{Prosper} & $0.340 $ & $0.285 $ & $0.238 $ & $0.266 $ & $0.324 \sz\pm 0.0$ & $0.277 \sz\pm 0.0$ & $0.227 \sz\pm 0.0$ & $\textbf{0.198} \sz\pm 0.0$ & $2.828$ & $0.678$ & $4.440$ & $217.687$\\ 
                \emph{Arxiv} & $2.438 $ & $2.274 $ & $2.072 $ & $2.015 $ & $1.809 \sz\pm 0.1$ & $1.571 \sz\pm 0.1$ & $1.281 \sz\pm 0.1$ & $\textbf{1.118} \sz\pm 0.0$ & $4.418$ & $3.062$ & $43.937$ & $1526.040$\\ 
                \emph{Youtube} & -- & -- & -- & -- & $7.631 \sz\pm 0.6$ & $6.710 \sz\pm 0.6$ & $5.863 \sz\pm 0.3$ & $\textbf{5.509} \sz\pm 0.4$ & $7.368$ & $23.110$ &  $88.130$ & $45857.200$\\ 
                \emph{StackOverflow} & -- & -- & -- & -- & $19.405 \sz\pm 2.2$ & $17.275 \sz\pm 1.4$ & $17.091 \sz\pm 1.9$ & $\textbf{15.016} \sz\pm 1.4$ & $19.826$ &  $140.628$ & $534.967$ & OOT\\ 
                \bottomrule
            \end{tabular}
        }	
    \end{subtable}
    
    \begin{subtable}{.95\linewidth}\centering
        \caption{Minimum duration queries.} 
        \label{table:reachquerysubstreamfp}
        \resizebox{1\linewidth}{!}{ 	\renewcommand{\arraystretch}{0.95}
            \begin{tabular}{lrrrrrrrrrr}
                \toprule
                \multirow{4}{0.5cm}{\vspace*{4pt}\textbf{Data~set}\vspace*{4pt}}&\multicolumn{4}{c}{\textsc{Greedy}}&\multicolumn{4}{c}{\textsc{SubStream}}& \multicolumn{2}{c}{\textbf{Baselines}}\\
                \textbf{ }          &  $k=32$ & $k=64$ & $k=128$ & $k=256$ & $k=32$ & $k=64$ & $k=128$ & $k=256$ & \textsc{OnePassFp} &  \textsc{LabelFp}  \\
                \cmidrule(r){1-1}\cmidrule(lr){2-5}                  
                \cmidrule(lr){6-9} \cmidrule(lr){10-10}\cmidrule(l){11-11}
                \emph{Infectious} & $0.099$ & $0.102$ & $0.093$ & $\textbf{0.079}$ & $0.172 \sz\pm 0.0$ & $0.119 \sz\pm 0.0$ & $0.095 \sz\pm 0.0$ & ${0.086} \sz\pm 0.0$ & $0.908$ & $0.170$ \\ 
                \emph{AskUbuntu} & $0.842$ & $0.894$ & $0.889$ & $0.924$ & $0.838 \sz\pm 0.0$ & $0.777 \sz\pm 0.0$ & $0.765 \sz\pm 0.0$ & $\textbf{0.759} \sz\pm 0.0$  & $1.642$ & $1.272$ \\ 
                \emph{Prosper} & $2.079$ & $1.993$ & $1.425$ & $1.957$ & $2.253 \sz\pm 0.2$ & $1.988 \sz\pm 0.1$ & $1.840 \sz\pm 0.1$ & $\textbf{1.597} \sz\pm 0.1$  & $18.702$ & $3.005$ \\ 
                \emph{Arxiv} & $15.463$ & $15.385$ & $14.057$ & $14.546$ & $14.164 \sz\pm 0.8$ & $12.966 \sz\pm 0.3$ & $11.962 \sz\pm 0.4$ & $\textbf{11.365} \sz\pm 0.3$ & $21.023$ & $83.369$ \\ 
                \emph{Youtube} & -- & -- & -- & -- & $77.162 \sz\pm 6.5$ & $73.986 \sz\pm 8.8$ & $72.641 \sz\pm 0.8$ & $\textbf{70.787} \sz\pm 1.0$ & $163.542$ & $105.808$ \\ 
                \emph{StackOverflow} & -- & -- & -- & -- & $151.349 \sz\pm 6.2$ & $140.116 \sz\pm 2.9$ & $135.892 \sz\pm 2.3$ & $\textbf{134.563} \sz\pm 3.8$ & $303.240$ & $995.465$ \\                 
                \bottomrule
            \end{tabular}
        }	
    \end{subtable}%
\end{table*}

For each data set, we chose two random subsets $Q_1\subseteq V$ and $Q_2\subseteq V$ with $|Q_1|=|Q_2|=1000$. We run SSAD \emph{earliest arrival} from each vertex $u\in Q_1$, and \emph{minimum duration} queries from each vertex $u\in Q_2$. We used the same sets $Q_1$ and $Q_2$ for all algorithms.
Furthermore, we used the lifetime spanned by each temporal graph as the restrictive time interval. 
Because \textsc{TopChain} only supports SSSD queries, we added queries from each query vertex to all other vertices for \textsc{TopChain}. 
We report the average running times and standard deviations over ten separately constructed and evaluated \textsc{SubStream} indices.
Note that the queries are processed sequential for all indices and algorithms.

\subsubsection{Running times}
\Cref{table:reachquerysubstream} shows the total running times for the 1000 earliest arrival (\Cref{table:reachquerysubstreamea}) and minimum duration (\Cref{table:reachquerysubstreamfp}) queries.
Our indices perform best for all data sets and both query types.
For \textsc{TopChain}, we only report the running times of reachability queries in \Cref{table:reachquerysubstreamea} because the provided implementation does not support other types of queries.
The reported running times are lower bounds for the earliest arrival and minimum duration queries~(see~\cite{wu2016reachability}).
The query times of \textsc{TopChain} are up to several orders of magnitude higher than the running times of our indices for both earliest arrival and minimum duration queries. The reason is that \textsc{TopChain} is designed for SSSD queries. \textsc{TopChain} cannot answer the query for \emph{StackOverflow} in the time limit.
\textsc{SubStream} is the fastest for all data sets but in one case.
For \emph{Infectious} and $k=256$, \textsc{Greedy} is fastest.
In all other cases, \textsc{SubStream} answers queries in most cases slightly faster than 
\textsc{Greedy}.
The query times mostly decrease for increasing $k$ because the number of edges in each of the $k$ substreams is reduced; thus, fewer edges must be considered during the queries.
As expected, the running times increase with graph size in most cases.
Note that \textsc{Dl} and \textsc{Xuan} can be faster than \textsc{OnePass} if many vertices have limited reachability. The reason is that they stop processing when no further edge can be relaxed, and the priority queue is empty, where the streaming algorithms have to process the remaining stream until the end of the time interval. Similarly, \textsc{LabelFp} is faster than \textsc{OnePassFp} for some data sets.
Finally, \textsc{Dl} is faster than \textsc{Xuan} because the latter is primarily designed for temporal graphs with low dynamics~\cite{xuan2003computing}.

\subsection{Vertical Scalability and Batch Sizes}\label{sec:experiments:vs}

To evaluate the vertical scalability, we varied the number of threads in $\{1,2,4,8,16,32\}$.
Up to 16 threads, doubling the number of threads almost halves the running time. From 16 to 32 threads reduces the running time by more than 25\%.

Finally, we varied the batch size in $B\in\{1024,2048,4096,8192,$ $16384,|V|\}$ for \Cref{alg:improved}.
The parallel utilization of processing units was reduced for smaller batch sizes, and the indexing time increased. However, fewer edge streams need to be held in memory during the computation, and, therefore, the memory usage is less.
For larger batch sizes, the memory consumption increases while the indexing time decreases. 
Hence, the best indexing time can be achieved with large batch sizes. 
In the case of many vertices and large edge streams $\edgestream(v)$, a smaller batch size can reduce the amount of memory required for the computation.

\subsection{Indexing Times for $k=2048$}\label{appendix:indexingtimeslargek}
\Cref{table:largek} shows the running times of \Cref{alg:improved} for $k=2048$.
\begin{table}[t]
    \centering
    \caption{Indexing times in seconds for \textsc{SubStream} for $k=2048$.} 
    \label{table:largek}
    \begin{tabular}{lr}
        \hline
        \textbf{Data set} & {Indexing Time} \\\midrule
        \emph{Infectious} & 0.90 \\
        \emph{AskUbuntu} & 6.93 \\
        \emph{Prosper} & 18.47 \\
        \emph{Arxiv} & 18.96 \\
        \emph{Youtube} & 4\,186.58 \\
        \emph{StackOverflow} & 10\,543.16 \\
        \bottomrule
    \end{tabular}
\end{table}

\section{Additional Related Work}\label{appenx:rw}
\textbf{Temporal graphs and temporal paths.}
An early overview of dynamic graph models is given by Harary and Gupta~\cite{harary1997dynamic}.
More recent and comprehensive introductions to temporal graphs are provided in, e.g.,~\cite{holme2015modern,streamgraphs,santoro2011time,wang2019time}. 
There is an early work on temporal paths by Cooke and Halsey~\cite{cooke1966shortest}. 
Kempe et al.~\cite{KempeKK02} discuss time-respecting paths and related connectivity problems.
Xuan et al.~\cite{xuan2003computing} introduce algorithms for finding the shortest, fastest, and earliest arrival paths, which are generalizations of Dijkstra's shortest paths algorithm. 
In~\cite{HuangCW14}, variants of temporal graph traversals are defined. 
The authors of~\cite{tgenum} consider bicriteria temporal paths in weighted temporal graphs, where each edge has an additional cost value.
Wu et al.~\cite{wu2014path} introduce streaming algorithms for finding the fastest, shortest, latest departure, and earliest arrival paths. 
In~\cite{kanoulas2006finding}, the authors use the A* approach to find the set of all fastest $(u,v)$-paths in a road network and a given interval. 
\\\noindent\textbf{Indexing for static graphs.}
Yu and Cheng~\cite{YuC10} give an overview of indexing techniques for reachability and distances in static graphs.
Common approaches enrich the static graph with labels at the vertices that can be used to determine reachability or distances. Among them are hub/hop~\cite{CohenHKZ03,gavoille2004distance,ChengYLWY06,JinXRF09,peng2020answering,abraham2012hierarchical,li2017experimental}, landmark~\cite{akiba2013fast}, and interval labelings~\cite{shuo2015ailabel}, as well as tree~\cite{AgrawalBJ89,WangHYYY06,TrisslL07}, chain~\cite{ChenC08} and path covers~\cite{JinXRW08}.
Further approaches are specifically designed for road networks, e.g., are contraction hierarchies~\cite{GeisbergerSSD08,geisberger2012exact}, transit nodes~\cite{bast2006transit,bast2007fast,eisner2012transit}, and highway hierarchies~\cite{sanders2005highway,sanders2006engineering}.
\\\noindent\textbf{Indexing for temporal graphs.}
There are several works on SSSD time-dependent routing in transportation networks, e.g.,~\cite{BatzDSV09,BastCEGHRV10,Delling11,Geisberger10}.
Wang et al.~\cite{wang2015efficient} propose \emph{Timetable Labeling (TTL}), a labeling-based index for SSSD reachability queries that extends hub labelings for temporal graphs.
In~\cite{wu2016reachability}, the authors introduce an index for SSSD reachability queries in temporal graphs called \textsc{TopChain}. The index uses a static representation of the temporal graph as a directed acyclic graph (DAG). 
On the DAG, a chain cover with corresponding labels for all vertices is computed. The labeling can then be used to determine the reachability between vertices.
\textsc{TopChain} is faster than TTL and has shorter query times (see \cite{wu2016reachability}). We used \textsc{TopChain} as an SSSD baseline in our evaluation (see \Cref{sec:experiments}).  
The authors of~\cite{zhang2019efficient} present an index for reachability queries designed for distributed environments. 
It is similar to \textsc{TopChain} but forgoes the transformation into a DAG.
The authors of \cite{chen2021efficiently} use 2-hop labelings for indexing bipartite temporal graphs to answer reachability queries.
As far as we know, our work is the first one examining indices for SSAD temporal distance queries.

\end{document}